\documentclass[11pt]{article}

\usepackage[a4paper,margin=1in]{geometry}
\usepackage{amsmath,amssymb,amsthm}
\usepackage{graphicx}
\graphicspath{{figures/}}
\usepackage{booktabs}
\usepackage{multirow}
\usepackage{algorithm}
\usepackage{algpseudocode}
\usepackage[numbers,sort&compress]{natbib}
\usepackage{microtype}
\usepackage[colorlinks=true,allcolors=blue,breaklinks=true]{hyperref}

\providecommand{\qed}{\ensuremath{\square}}

\theoremstyle{plain}

\newtheorem{proposition}{Proposition}

\theoremstyle{definition}

\theoremstyle{remark}

\title{\bfseries A directional Hosmer--Lemeshow goodness-of-fit test\\ for sparse logistic regression}

\author{%
  Ebrahim Khaled Ebrahim\thanks{Department of Statistics, Faculty of Business, Alexandria University,
  Egypt. \texttt{ebrahimkhaled@alexu.edu.eg}. ORCID: 0009-0006-7839-8778. Corresponding author.}
  \and
  Ahmed El-Kotory\thanks{Department of Statistics, Alexandria University, Egypt.}%
}
\date{\today}

\begin{document}
\maketitle

\begin{abstract}
\noindent Goodness-of-fit assessment for the binary logistic regression model is difficult when covariates
are continuous: the data are effectively sparse, the classical Pearson and deviance tests fail, and
practitioners rely on \emph{partition-based} tests, such as the Hosmer--Lemeshow test, that group
observations before comparing observed and expected counts. We study a partition test that modifies the
Hosmer--Lemeshow statistic with a single directional correction term, weighted by $(1-2\bar\pi_g)$ and
referred to a $\chi^2_{G-2}$ distribution. The correction is the grouped form of the Osius--Rojek/Farrington
standardization; grouping makes it well defined in the sparse regime, and it targets the asymmetric over- and
under-prediction that a misspecified link induces. A single alignment functional captures its effect,
predicting where the test gains power (asymmetric-link misspecification) and where it does not (symmetric
departures, and covariate-space structure that no probability-grouping test can see). In simulations the test
holds its size; no well-calibrated partition test is more sensitive to asymmetric-link misfit, and it clearly
exceeds Hosmer--Lemeshow there, most so for the complementary log--log link---a modest gain that fades as $n$
grows; it ties Hosmer--Lemeshow on an omitted interaction and is less powerful on an omitted quadratic (by
about ten percentage points at $n=1000$). A real-data application illustrates its use, and the test is
implemented in the R package \texttt{ebrahim.gof}.

\medskip
\noindent\textbf{Keywords:} Calibration; Goodness-of-fit; Hosmer--Lemeshow test; Link misspecification;
Logistic regression; Sparse data.
\end{abstract}

\section{Introduction}
\label{sec:intro}

Binary logistic regression is among the most widely used tools for relating a
dichotomous outcome to a set of covariates. In clinical research it underlies many of the
risk scores that inform decisions---from Framingham and QRISK for cardiovascular events to
prognostic models for surgical and oncological outcomes---whose usefulness depends on their
predicting the right event rate across the range of risk. Assessing goodness of fit is how that
is checked, and it is a routine part of the responsible use of such a model. When the covariates
are continuous, however,
the data are effectively \emph{sparse}: almost every observation has a distinct
covariate pattern, so the number of patterns grows with the sample size and the
classical Pearson ($X^2$) and deviance ($D$) statistics no longer follow their nominal
$\chi^2$ reference distributions \citep{hosmer2013applied,mccullagh1985}. In this regime
practitioners rely on \emph{partition-based} tests, which first reduce the data to a
small number of groups and then compare observed and expected counts within them. The
Hosmer--Lemeshow (HL) test \citep{hosmer1980goodness}, which sorts observations by their
predicted probabilities and forms $G$ (conventionally ten) groups of equal size, remains
the de facto standard, and a family of related procedures---the Pigeon--Heyse test
\citep{pigeon1999b}, the covariate-space tests of \citet{Tsiatis1980} and
\citet{xie2008increasing}, and others---share its grouping philosophy while differing in
how the groups are formed and how the resulting statistic is standardized.

A practitioner choosing among these tests faces two difficulties that motivate the present
work. The first is that the tests are not interchangeable: they disagree, sometimes sharply,
about which departures from fit they can detect, and the applied literature offers only a
partial map of these differences \citep{hosmer1997comparison,Liu2024comprehensive}. The
second is more subtle. A test that rejects tells the analyst \emph{that} the model is
inadequate but not \emph{how}; yet the appropriate remedy---adding a nonlinear term, an
interaction, or changing the link function---depends entirely on the nature of the
misspecification. A goodness-of-fit procedure that is not only powerful but also
\emph{informative about the kind of misfit} is therefore of direct practical value.

In this paper we study a partition test that addresses both concerns within the HL
family. The test modifies the HL statistic $\hat C_G$ with a single \emph{directional correction term},
\begin{equation*}
T_{\mathrm{EF}} \;=\; \hat C_G \;-\; C, \qquad
C \;=\; \sum_{g=1}^{G} (1-2\bar\pi_g)\,\frac{o_g - e_g}{V_g},
\end{equation*}
and refers the result to a $\chi^2_{G-2}$ distribution. The correction is the grouped form
of the first-order Pearson standardization of \citet{osius1992normal} and
\citet{farrington1996}: the weight $(1-2\bar\pi_g)$ changes sign at $\bar\pi_g=\tfrac12$, so
$C$ measures the systematic, direction-dependent over- and under-prediction that a
misspecified---particularly an \emph{asymmetric}---link function induces. Such asymmetric,
complementary-log--log-type structure arises routinely in medical data---in discrete-time survival and
hazard models, dose--response curves, and rare-event risk \citep{agresti2013categorical}---so a logistic
link fitted there produces exactly the directional misfit the correction targets. Farrington's
correction is degenerate for individual binary observations, where each trial count is one;
imposing HL-style equal-size grouping forces the per-group trial count to exceed one and
reinstates it, so the test is well defined precisely in the sparse setting that motivates it.
Crucially, $T_{\mathrm{EF}}$ is computed in a single pass over the fitted probabilities and
requires \emph{no auxiliary model refit}, in contrast to score tests such as Stukel's
\citep{stukel1988generalized} that augment the linear predictor and re-estimate---an extra
estimation step the EF test avoids.

We do not claim a completely new test statistic: the idea of the correction term was first
introduced by Osius--Rojek and Farrington, and we position our test explicitly as a member of the standardized-Pearson
partition family rather than as a departure from it. Our contribution is instead a careful,
honest characterization of \emph{what this directional correction buys and where}. We make
three points. (i)~In a comprehensive simulation study---three misspecification families (an
omitted nonlinearity, an omitted interaction, and a continuous asymmetric-link sweep), each over a
severity grid, together with a partition-family comparison across seven representative departures---%
evaluated at a common nominal level, the test holds its size and is competitive with the best
partition tests across much of the misspecification space (though, like all probability-grouping
tests, not against covariate-space structure). No
\emph{well-calibrated} partition test is more sensitive to asymmetric link misspecification, and it
clearly exceeds standard Hosmer--Lemeshow, most so for the
complementary log--log link. (ii)~We show that its advantage---and its
limitations---are governed by a single interpretable quantity, an alignment functional
$A(\delta)$ that captures how the direction of the group-residual pattern lines up with the
correction weight; this functional explains not only where the test wins but also, correctly,
where it does not (an omitted quadratic term, and covariate-space
departures such as interactions, which no probability-grouping test can detect). (iii)~The
per-group contributions $\{c_g\}$ to $C$ provide a directional read-out that helps localize
misfit on the probability scale, complementing the omnibus decision with information about its
kind. The test is implemented, together with the rest of the partition family used here, in the
R package \texttt{ebrahim.gof} \citep{ebrahimgof}, available on CRAN.

The remainder of the paper is organized as follows. Section~\ref{sec:family} reviews the
partition-based family and places the directional correction within it. Section~\ref{sec:test}
defines the statistic, its equal-quantile grouping, and its $\chi^2_{G-2}$ reference, and
introduces the alignment functional. Section~\ref{sec:sim} sets out the simulation design, Section~\ref{sec:results} reports its results, and
Section~\ref{sec:realdata} illustrates the test on real data. Section~\ref{sec:discussion}
concludes with practical recommendations and the test's honest limitations. A directional
companion procedure, in which the correction term is studentized and tested in its own right,
is developed separately.

\section{Partition-based tests and the directional correction}
\label{sec:family}

Partition-based goodness-of-fit tests reduce a sparse binary data set to a manageable
$2\times G$ table of observed and expected event counts, and then form a Pearson-type
statistic on that table. They differ along two axes: \emph{how the groups are formed}, and
\emph{how the resulting statistic is standardized}. Placing the Ebrahim--Farrington (EF)
test on both axes clarifies exactly what is, and is not, new about it.

\subsection{Grouping strategies}
\label{subsec:grouping}

Two broad grouping strategies dominate the literature.

\emph{(a) Grouping by the predicted probability.} The Hosmer--Lemeshow (HL) test
\citep{hosmer1980goodness} sorts the observations by their fitted probabilities
$\hat\pi_i$ and splits them into $G$ groups of equal size (deciles of risk when $G=10$),
comparing observed and expected events with a Pearson statistic referred to
$\chi^2_{G-2}$. The Pigeon--Heyse test \citep{pigeon1999b} uses the same grouping but a
variance-corrected statistic (its $J^2$ form is known to be somewhat conservative).
Probability-based grouping is model-driven and simple, but it is, by construction, blind to
departures that are invisible on the fitted-probability scale---most importantly, omitted
structure in the covariate space that leaves the marginal calibration intact.

\emph{(b) Grouping by the covariate space.} The Tsiatis test \citep{Tsiatis1980} partitions
the covariate region and tests, via a score statistic, whether region indicators improve the
fit; the Xie test \citep{xie2008increasing} forms the partition by clustering the covariates.
These tests can detect covariate-space departures (e.g.\ omitted interactions) that
probability-grouping tests miss, at the cost of a partition that depends on the covariates
rather than the model and, empirically, of somewhat looser size control. Hybrid schemes
(Pulkstenis--Robinson \citep{pulkstenis2002two}; the adaptive BAGofT \citep{BAGofT2019})
combine the two. No single grouping is uniformly best: which departures a partition test can
see is largely fixed by \emph{where} it groups. Recent work has continued to refine the
Hosmer--Lemeshow test itself---via corrections for replicated Bernoulli trials
\citep{surjanovic2024improving} and anytime-valid, e-value formulations \citep{henzi2024safe}---modifying
its sampling or inference scheme; the directional correction we study is complementary, adjusting the
classical statistic in closed form.

\subsection{Standardizing the Pearson statistic: the McCullagh--Farrington--Osius--Rojek line}
\label{subsec:standardize}

The second axis concerns the statistic computed \emph{on} the groups. The naive grouped
Pearson statistic---HL's $\hat C_G$---is a sum of squared standardized residuals. A separate
line of work observed that in sparse tables this statistic has the wrong mean and variance,
and constructed \emph{standardized} corrections. \citet{mccullagh1985} derived the correct
conditional moments of the Pearson statistic in the sparse limit; \citet{farrington1996}
introduced a \emph{first-order corrected} statistic that subtracts a term rendering it
locally orthogonal to the estimated coefficients; and \citet{osius1992normal} gave the
asymptotically equivalent normal-approximation form. To first order these corrections share a
common signed adjustment proportional to $(1-2\hat\pi)$---McCullagh's leading conditional-mean
term, the linear part of Farrington's first-order correction, and the mean-shift implicit in the
Osius--Rojek normal standardization---although they differ in construction: Osius--Rojek
standardize the Pearson statistic by its estimated mean and variance, whereas Farrington subtracts
a score-orthogonal term. Smoothing-based goodness-of-fit tests
\citep{lecessie1991goodness} take a different route again, replacing grouping with a residual
smoother; they sit outside the partition family that is our subject.

The EF test is the member of this standardized line that is grouped in the Hosmer--Lemeshow
manner. Concretely, on HL's equal-quantile deciles it computes
\begin{equation}
\label{eq:ef_family}
T_{\mathrm{EF}} \;=\; \underbrace{\sum_{g=1}^{G}\frac{(o_g-e_g)^2}{V_g}}_{\text{HL Pearson sum }\hat C_G}
\;-\;\underbrace{\sum_{g=1}^{G}(1-2\bar\pi_g)\frac{o_g-e_g}{V_g}}_{\text{directional correction }C},
\qquad V_g=n_g\bar\pi_g(1-\bar\pi_g),
\end{equation}
and refers $T_{\mathrm{EF}}$ to $\chi^2_{G-2}$. We state plainly what this is and is not.
The correction term $C$ is \emph{not new}: it is the grouped form of Farrington's linear
correction---equivalently, the mean-adjustment of the Osius--Rojek standardization---aggregated
over the HL groups. What is specific to EF is the \emph{combination}. The Osius--Rojek standardization is most often used
in its \emph{normal} form---a single global $z$-test applied directly to individual binary data---%
whereas Farrington's first-order \emph{corrected chi-squared} statistic is degenerate for individual
data, where each covariate pattern carries a single trial \citep{kuss2002}. EF is neither: it carries
the $(1-2\hat\pi)$ adjustment on Hosmer--Lemeshow's equal-size decile grouping, which forces the
per-group count well above one, so the corrected \emph{chi-squared} form is well defined and is
referred---as HL's own statistic is---to $\chi^2_{G-2}$ under the Moore--Spruill argument
\citep{Moore1975,hosmer1980goodness}. The purpose of the grouping is not to make a correction
possible---the normal $z$-test needs none---but to place the correction in the widely used
Hosmer--Lemeshow family, with HL's reference distribution and, because it is grouped, a per-decile
directional read-out that a single global statistic cannot provide. EF is therefore best read as
\emph{Hosmer--Lemeshow augmented by the standardized-Pearson correction term}, computed in a
single pass with no auxiliary model refit.

\subsection{What the correction changes, and what this paper asks}
\label{subsec:whatchanges}

Because $C$ is subtracted from $\hat C_G$, its effect on the test is governed entirely by the
signed, tail-weighted quantity in Eq.~\eqref{eq:ef_family}. Two questions follow, and they
organize the rest of the paper. First, \emph{when does subtracting $C$ change the test's
sensitivity?} We show (Section~\ref{sec:test}) that the answer is captured by a single
alignment functional and that, at the decile scale, the correction adds sensitivity to
directional---specifically asymmetric-link---misspecification while leaving other departures
essentially unchanged. Second, \emph{how does EF compare, across the misspecification space,
to the partition family it belongs to?} We answer this with a broad simulation study
(Section~\ref{sec:sim}) that reports EF alongside HL, Pigeon--Heyse, Tsiatis and Xie at a
common nominal level. We do not benchmark EF against directed score tests such as Stukel's
\citep{stukel1988generalized}: those target a pre-specified alternative and require a model
refit, and are complementary to---rather than competitors of---an omnibus, refit-free
partition test.

\section{When does the correction help? The alignment functional}
\label{sec:test}

Section~\ref{sec:family} showed that the Ebrahim--Farrington statistic is the
Hosmer--Lemeshow statistic minus a single directional term,
$T_{\mathrm{EF}}=\hat C_G-C$ with $C=\sum_{g}(1-2\bar\pi_g)(o_g-e_g)/V_g$. Because the
two statistics share the Pearson core $\hat C_G$, everything that distinguishes the
Ebrahim--Farrington test from Hosmer--Lemeshow is carried by $C$. This section makes that
statement precise: we introduce an \emph{alignment functional} that determines, for any
given departure from fit, whether---and by how much---the correction changes the test, and
we show that it is zero exactly for symmetric departures, so that the two tests then coincide.

\subsection{Local departures and the alignment functional}
\label{subsec:align}

Consider a sequence of local alternatives under which the fitted logistic model departs from
the truth by an $O(n^{-1/2})$ amount, so that the group residuals acquire systematic biases
$\delta_g=\mathbb{E}(o_g-e_g)$, $g=1,\dots,G$, while their variances remain $V_g$ to first
order. Under the null all $\delta_g=0$; a departure from fit is encoded by the pattern of the
$\delta_g$ across the risk groups. Define the \emph{alignment functional}
\begin{equation}
\label{eq:Adelta}
A(\delta)\;=\;\sum_{g=1}^{G}(1-2\bar\pi_g)\,\frac{\delta_g}{V_g}.
\end{equation}
$A(\delta)$ is the inner product between the residual-bias pattern $\{\delta_g\}$ and the
correction weight $\{(1-2\bar\pi_g)\}$, a quantity that is positive in the low-risk groups
($\bar\pi_g<\tfrac12$), negative in the high-risk groups, and zero at $\bar\pi_g=\tfrac12$.
It therefore measures how strongly the misfit is \emph{directional}---systematically signed
across the range of predicted risk---as opposed to symmetric.

\begin{proposition}[Alignment]
\label{prop:align}
Under the local alternatives above, to first order:
\begin{enumerate}
\item[(i)] the mean of the Ebrahim--Farrington statistic differs from that of the
Hosmer--Lemeshow statistic by exactly $-A(\delta)$, that is
$\mathbb{E}(T_{\mathrm{EF}})-\mathbb{E}(\hat C_G)=-A(\delta)$;
\item[(ii)] if the residual-bias pattern is symmetric about $\bar\pi_g=\tfrac12$ (an even
function of $\bar\pi_g-\tfrac12$) and the fitted risks are placed symmetrically about $\tfrac12$,
then $A(\delta)=0$ and the two tests are first-order equivalent, $T_{\mathrm{EF}}=\hat C_G+o_p(1)$;
\item[(iii)] to this order the correction raises the mean of the Ebrahim--Farrington statistic
above that of Hosmer--Lemeshow---and hence, to first order, its non-centrality---exactly when
$A(\delta)<0$, i.e.\ when the directional (odd) component of the residual-bias pattern is negatively
aligned with the correction weight (the signature of an asymmetric link).
\end{enumerate}
Under the null hypothesis of correct specification, $T_{\mathrm{EF}}$ is referred to the
$\chi^2_{G-2}$ distribution, as for the Hosmer--Lemeshow test, by the Moore--Spruill argument
for model-based grouping \citep{Moore1975,hosmer1980goodness}.
\end{proposition}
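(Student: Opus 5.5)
The plan is to exploit linearity: the two statistics differ by the single term $C$, so everything reduces to the first-order behaviour of $C$ under the local alternatives. For part~(i), write $T_{\mathrm{EF}}-\hat C_G=-C=-\sum_g w_g (o_g-e_g)/V_g$ with $w_g=1-2\bar\pi_g$. Treat the grouping (and hence $\bar\pi_g$, $V_g$, $w_g$) as fixed to first order, justifying this by the Moore--Spruill argument that the random, model-based quantile grouping may be replaced by its limiting deterministic counterpart at cost $o_p(1)$. The expectation then passes through the sum: $\mathbb{E}(C)=\sum_g w_g\delta_g/V_g=A(\delta)$, which gives $\mathbb{E}(T_{\mathrm{EF}})-\mathbb{E}(\hat C_G)=-A(\delta)$. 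The one technical point is the replacement of $\bar\pi_g$ and $V_g$ by their probability limits inside the ratio. I would handle it by a first-order Taylor expansion. The fluctuation of $\bar\pi_g$ is $O_p(n^{-1/2})$ and multiplies $(o_g-e_g)/V_g$, which is itself $O_p(n^{-1/2})$ once we use $n_g\asymp n$. Hence the cross term is $o_p$ of the leading term.

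For part~(ii), split the index set by reflection. Symmetric placement means the group means pair up as $\bar\pi_g$ and $\bar\pi_{g'}=1-\bar\pi_g$, with a possible central group at $\tfrac12$ where $w_g=0$. Under this pairing $w_{g'}=-w_g$ and $V_{g'}=V_g$, because $\pi(1-\pi)$ is even about $\tfrac12$ and the equal-size groups have $n_{g'}=n_g$. Evenness of the bias pattern gives $\delta_{g'}=\delta_g$, so paired terms cancel and $A(\delta)=0$. To upgrade this to $T_{\mathrm{EF}}=\hat C_G+o_p(1)$, write $C=A(\delta)+\sum_g w_g(o_g-e_g-\delta_g)/V_g$. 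Each centred residual divided by $V_g$ is $O_p(n_g^{1/2})/n_g=O_p(n^{-1/2})$, so the stochastic part of $C$ vanishes. Note that the statement needs both symmetry hypotheses, since only the pairing makes the weights antisymmetric.

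For part~(iii), combine (i) with the local noncentral-$\chi^2$ description of $\hat C_G$. To first order both statistics have the same $\chi^2_{G-2}$-type null structure, because the correction is $o_p(1)$ under the null by the same variance calculation as in~(ii). The shift in the mean therefore translates into a shift in the effective noncentrality. It follows that $\mathbb{E}(T_{\mathrm{EF}})>\mathbb{E}(\hat C_G)$ if and only if $A(\delta)<0$. Decomposing $\delta_g/V_g$ into even and odd parts about $\tfrac12$ and applying (ii) to the even part shows that only the odd (directional) component contributes to $A$. This gives the stated characterisation. The final null claim is just the Moore--Spruill result cited, together with $C=o_p(1)$ under the null.

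The main obstacle is scaling consistency. With $\delta_g=O(n^{1/2})$ the fixed term $A(\delta)$ is $O(n^{-1/2})$, while $\hat C_G$ has $O(1)$ noncentrality. The "exact" first-order difference in (i) is therefore of lower order than the noncentrality itself. I would state (i) as an identity for the leading term of $\mathbb{E}(C)$, which is exact by linearity. I would then phrase (iii) as a statement about the direction of the mean shift, not its magnitude relative to power.
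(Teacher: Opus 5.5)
Your proposal is correct and follows essentially the same route as the paper's appendix. Part (i) comes from linearity of $C$ with the weights $w_g/V_g$ treated as fixed, so $\mathbb{E}(C)=A(\delta)$; parts (ii) and (iii) come from the odd weight $w_g=1-2\bar\pi_g$ cancelling paired even terms and leaving only the odd component; and the null reference comes from Moore--Spruill together with $C=b^\top Z=o_p(1)$ (because $b_g=O(\sqrt{G/n})$) and Slutsky.

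Your scaling remark is sound and more careful than the paper, which only concedes informally that the correction's relative contribution is $O(n^{-1/2})$. Since $A(\delta)=O(n^{-1/2})$ under these alternatives, $C=o_p(1)$ holds for every such local alternative. So the equivalence $T_{\mathrm{EF}}=\hat C_G+o_p(1)$ in (ii) does not actually need $A(\delta)=0$, and (iii) is best read, as you propose, as a statement about the sign of a finite-sample mean shift.
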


The identity in part (i) is immediate---$C$ is linear in the group residuals, so
$\mathbb{E}(C)=A(\delta)$---and parts (ii)--(iii) follow because the weight $(1-2\bar\pi_g)$ is
an odd function of $\bar\pi_g-\tfrac12$: it annihilates any even (symmetric) component of the
bias pattern and retains only the odd (directional) component. A short proof of parts~(i)--(ii) is
given in Appendix~\ref{app:proof}, together with a first-order argument that the correction leaves
the $\chi^2_{G-2}$ null reference intact to that order---confirmed empirically by the
Kolmogorov--Smirnov checks of Section~\ref{subsec:type1}. The asymptotic moment machinery is that
of \citet{osius1992normal} and \citet{farrington1996}, which we cite rather than reproduce.

\paragraph{Interpretation.} Proposition~\ref{prop:align} is the interpretive engine of the paper.
It says the correction is not a generic power boost: it contributes \emph{only} to the extent
that misfit is directional, and it is exactly inert---so that the Ebrahim--Farrington test is
indistinguishable from Hosmer--Lemeshow---whenever misfit is symmetric. A departure whose group-residual pattern is even about $\bar\pi_g=\tfrac12$ gives
$A(\delta)\approx 0$ and leaves the two tests equivalent; a departure with a directional (odd) component
gives $A(\delta)\neq 0$, and its sign decides the direction. A misspecified \emph{asymmetric} link induces
a signed bias with $A(\delta)<0$ and a genuine \emph{gain}; a departure whose directional component runs the
other way---an omitted quadratic in a skewed risk range, for instance---gives $A(\delta)>0$ and a genuine
\emph{loss}. Two honest
qualifications follow directly and we state them here rather than hide them. First, the effect is
\emph{first order for directional departures but second order in the sample size}: the correction
$C$ is a linear term added to a quadratic statistic, so its relative contribution is $O(n^{-1/2})$,
which is why the advantage we document is real but modest and recedes as $n\to\infty$. Second,
``asymmetric link'' is not a guarantee: whether a particular asymmetric link produces a large
$A(\delta)$ depends on how its residual pattern aligns with $(1-2\bar\pi_g)$; among the links we
study the complementary log--log gives the largest $|A(\delta)|$ and the clearest gain, the log--log a
smaller one, while near-symmetric links give $A(\delta)\approx0$ and no gain. It is $A(\delta)$, not
asymmetry \emph{per se}, that predicts the outcome---a prediction we confirm empirically in
Section~\ref{sec:results}.

\subsection{A directional read-out}
\label{subsec:readout}

Because $C=\sum_g c_g$ with $c_g=(1-2\bar\pi_g)(o_g-e_g)/V_g$, the Ebrahim--Farrington test
carries, at no extra cost, a per-group \emph{directional read-out} $\{c_g\}$. A systematic
tilt in $\{c_g\}$ across the ordered risk groups---positive contributions concentrating at one
end of the risk range and negative at the other---localizes directional misfit on the
probability scale, complementing the omnibus $p$-value with information about the \emph{kind} of
departure. We demonstrate this read-out on a controlled asymmetric-link misfit in
Section~\ref{subsec:readout_demo} and apply it to real data in
Section~\ref{sec:realdata}.

\subsection{Computation}
\label{subsec:algorithm}

The test makes a single pass over the fitted probabilities and refits no auxiliary model.

\begin{algorithm}[t]
\caption{The Ebrahim--Farrington (EF) test}
\label{alg:ef}
\begin{algorithmic}[1]
\State Fit the logistic model and compute the fitted probabilities $\hat\pi_i$.
\State Sort observations by $\hat\pi_i$ and form $G$ equal-sized groups (deciles of risk, $G=10$
by default); if $G\nmid n$, distribute the remainder so group sizes differ by at most one.
\State For each group $g$ compute $o_g=\sum_{i\in g}y_i$, $\bar\pi_g$ the mean fitted
probability, $e_g=n_g\bar\pi_g$, and $V_g=n_g\bar\pi_g(1-\bar\pi_g)$.
\State Compute the Hosmer--Lemeshow sum $\hat C_G=\sum_g (o_g-e_g)^2/V_g$, the correction
$C=\sum_g (1-2\bar\pi_g)(o_g-e_g)/V_g$, and $T_{\mathrm{EF}}=\hat C_G-C$.
\State Report the $p$-value $1-F_{\chi^2_{G-2}}(T_{\mathrm{EF}})$ and the directional read-out
$\{c_g\}$.
\end{algorithmic}
\end{algorithm}

The reference distribution was verified against Monte-Carlo simulation of the null: across
sample sizes and group counts the empirical distribution of $T_{\mathrm{EF}}$ is close to
$\chi^2_{G-2}$ (Kolmogorov--Smirnov checks summarized in Section~\ref{subsec:type1}), so the
$\chi^2_{G-2}$ reference---and hence the test's size---is reliable. The test is implemented in the
R package \texttt{ebrahim.gof}.

\section{Simulation design}
\label{sec:sim}

We evaluate the Ebrahim--Farrington test against the partition family it belongs to using the
canonical goodness-of-fit misspecification battery of \citet{hosmer1997comparison}, together with
its simulation companion \citep{hosmer2002goodness}, extended by
an asymmetric-link sweep that is the decisive scenario for the alignment functional of
Section~\ref{sec:test}. The design is deliberately conventional: reusing an established battery
makes our results directly comparable to the prior literature, and reviewers of partition tests
expect the quadratic, interaction, and link-misspecification scenarios to be present. Throughout,
the significance level is $5\%$; every cell of the main EF-versus-Hosmer--Lemeshow study is
estimated from $K=5000$ independent replications (the broader partition-family comparison of
Section~\ref{sec:results} uses an independent $K=3000$ design, specified
there), and---following the structural discipline of the paper---the design is organized so
that the results of Section~\ref{sec:results} can report \emph{size before power}, reusing the
subsection order below.

\subsection{Competitors}
\label{subsec:competitors}

Table~\ref{tab:cast} lists the goodness-of-fit tests compared in this study; the row order is
reused in every subsequent table and figure. The evaluated partition tests---Hosmer--Lemeshow,
Pigeon--Heyse, Tsiatis, Xie and EF---are omnibus, refit-free procedures that apply wherever the
Hosmer--Lemeshow test applies, so the comparison is like-for-like; the Osius--Rojek/Farrington
entry is listed for provenance only, not evaluated as a competitor. We do not include directed
score tests such as Stukel's \citep{stukel1988generalized}, which target a pre-specified
alternative and require re-estimating an augmented model; as argued in Section~\ref{sec:family}
these are complementary to, not competitors of, an omnibus partition test.

\begin{table}[t]
\centering
\caption{Goodness-of-fit tests compared in this study. The evaluated tests are omnibus partition tests
requiring no auxiliary model refit; the Osius--Rojek/Farrington row is listed for provenance only. The Ebrahim--Farrington (EF) test
is listed last here; the results tables place it first for emphasis.}
\label{tab:cast}
\small
\begin{tabular}{@{}llll@{}}
\toprule
Test & Notation & Partition / basis & Provenance \\
\midrule
Hosmer--Lemeshow      & $\hat C_G$ & deciles of $\hat\pi$; Pearson sum        & \citet{hosmer1980goodness} \\
Pigeon--Heyse         & $J^2$      & deciles of $\hat\pi$; variance-corrected & \citet{pigeon1999b} \\
Tsiatis               & $z_T$      & covariate-region score                   & \citet{Tsiatis1980} \\
Xie                   & $z_X$      & covariate clustering                     & \citet{xie2008increasing} \\
Osius--Rojek/Farrington & $C$      & $(1-2\hat\pi)$ standardization            & \citet{osius1992normal,farrington1996} \\
Ebrahim--Farrington & $T_{\mathrm{EF}}$ & deciles of $\hat\pi$ + directional term & this paper \\
\bottomrule
\end{tabular}

\vspace{2pt}
\begin{minipage}{0.92\textwidth}
\footnotesize
\emph{Known weaknesses.} Hosmer--Lemeshow: grouping is subjective and it is weak against
link misspecification. Pigeon--Heyse: its $J^2$ form is somewhat conservative. Tsiatis and Xie:
sensitive to covariate-space structure but with looser empirical size control. Osius--Rojek/%
Farrington: degenerate for individual binary data (one trial per pattern) unless grouped. The
directional correction $C$ enters EF as $T_{\mathrm{EF}}=\hat C_G-C$; it is listed here only to
name the shared ingredient and is not evaluated as a stand-alone competitor (a directed test built
on $C$ is a separate development, outside the scope of this paper).
\end{minipage}
\end{table}

\subsection{Data-generating model and grouping}
\label{subsec:dgp}

In every scenario the covariates are a continuous predictor $x\sim U(-3,3)$ and a binary
predictor $d\sim\mathrm{Bernoulli}(0.5)$, and the fitted model is the correctly parameterized
main-effects logistic regression $\mathrm{logit}\,\pi = \beta_0+\beta_1 x+\beta_2 d$ with
$\beta_0=0$ (a marginal event rate of $0.55$), $\beta_1=0.6$, $\beta_2=0.5$. Because $x$ is continuous the data are sparse---almost every
observation has a distinct covariate pattern---which is the regime that makes partition-based
testing necessary. Under the null, data are generated from this same model, so departures from
fit are introduced only by the alternatives in Section~\ref{subsec:battery}. All tests use the
equal-quantile grouping of Algorithm~\ref{alg:ef} ($G=10$ unless the group count is a
studied factor), with the group-collapsing rule stated there; EF and Hosmer--Lemeshow are
referred to $\chi^2_{G-2}$.

\subsection{Type I error: the size factorial}
\label{subsec:sizedesign}

Size is assessed first, over a full factorial designed to stress the $\chi^2$ reference across
the conditions that most affect partition tests: the number of groups
$G\in\{5,10,20\}$, the shape of the fitted-probability distribution
(a \emph{symmetric} configuration in which $x\sim U(-3,3)$ spreads risk across the unit interval,
and a \emph{skewed} configuration in which a $\chi^2_4$ covariate concentrates risk toward one
tail), and the sample size $n\in\{100,500,1000,5000\}$. This is a $3\times2\times4=24$-cell
design per test. The two probability-shape configurations follow the recommendation to
characterize a data-generating process by the induced distribution of $\hat\pi$ rather than the
covariate alone, and the large $n=5000$ anchor is included because HL-family tests can
become unstable at large samples; a test whose size drifts there would not support a
``well-calibrated'' claim.

\subsection{Power: the misspecification battery}
\label{subsec:battery}

Three families of departure are swept, each over a severity grid so that power rises
monotonically and the detection threshold is visible. The first two are the symmetric and
covariate-space controls; the third is the asymmetric-link sweep that the theory of
Section~\ref{sec:test} predicts to be EF's niche.

\paragraph{(a) Omitted quadratic term.} The truth adds a quadratic term
$\psi x^2$ to the linear predictor while the fitted model omits it, with the coefficient swept
over $\psi\in\{0.01,0.02,0.03,0.05,0.10\}$. The fitted logistic absorbs the average curvature into its
intercept and slope, and the residual group-bias pattern that remains aligns \emph{positively} with the
correction weight: computed in the large-sample limit, $A(\delta)$ is positive and grows with severity
(about $+2.0$, $+3.1$ and $+5.2$ at coefficients $0.02$, $0.03$ and $0.05$). By
Proposition~\ref{prop:align}(iii) a positive alignment \emph{lowers} the non-centrality of
$T_{\mathrm{EF}}$ relative to Hosmer--Lemeshow, so the correction makes EF the \emph{less} powerful test
here---a directional penalty, the honest mirror image of the asymmetric-link gain. This is the
``no free lunch'' row.

\paragraph{(b) Omitted interaction (covariate-space control).} The truth adds an interaction
$\psi\,x\,d$ that the main-effects model omits, swept over $\psi\in\{0.1,0.3,0.5,0.7\}$. An
omitted interaction is a covariate-space departure that leaves the marginal calibration on the
$\hat\pi$ scale largely intact, so \emph{no} probability-grouping test---EF or Hosmer--Lemeshow---%
is expected to be its strongest detector. This row is where EF, honestly, does not lead.

\paragraph{(c) Misspecified link (the asymmetric sweep).} The truth generates outcomes through
the one-parameter \citet{arandaordaz1981} asymmetric family,
\begin{equation}
\label{eq:ao}
\pi(\eta;\alpha) \;=\; 1-(1+\alpha\,e^{\eta})^{-1/\alpha},
\qquad \eta = \beta_0+\beta_1 x+\beta_2 d,
\end{equation}
while the analyst fits the ordinary logistic (symmetric) link. The parameter $\alpha$ tunes the
asymmetry: $\alpha=1$ recovers the logit (symmetric, no misfit), and $\alpha\to 0$ approaches the
complementary log--log link (maximally asymmetric). We sweep
$\alpha\in\{0.9,0.7,0.5,0.35,0.2,0.1\}$, i.e.\ from a near-symmetric link to a near-cloglog link.
The Aranda--Ordaz family is the natural asymmetric-link generator for this study because it
embeds the logit and the cloglog in a single smooth path, so the severity knob is a continuous
departure from symmetry rather than a switch between models. Consistent with
Proposition~\ref{prop:align}, the alignment functional grows in magnitude as $\alpha$ decreases:
computed in the large-sample limit, $A(\delta)$ moves from $+0.002$ at $\alpha=0.9$ (essentially
symmetric) to $-0.910$ at $\alpha=0.1$ (strongly asymmetric).\footnote{Values from the alignment
computation of Section~\ref{sec:test}. The sign is negative
because, for the cloglog-type departure, the per-group contributions $c_g$ are positive in the low-risk
groups and negative in the high-risk groups, the negative high-risk terms dominating the sum---the
condition $A(\delta)<0$ of Proposition~\ref{prop:align}(iii).} This monotone relationship between $\alpha$, $A(\delta)$, and
the realized EF advantage is exactly the mechanism the paper tests.

For families (a) and (b) the sweep is run at $n\in\{500,1000,2000\}$; for the asymmetric-link
family, whose advantage is second order in the sample size (Section~\ref{sec:test}) and therefore
emerges gradually, the sweep is extended to $n\in\{500,1000,2000,5000\}$ so that both the onset
and the eventual fading of the advantage are visible. To place the evaluated tests on a common
footing, we additionally report a wider partition family (seven members)---EF, Hosmer--Lemeshow, its
equal-width variant, Pigeon--Heyse, Tsiatis, Xie, and Pulkstenis--Robinson
\citep{pulkstenis2002two}---across the complete
misspecification space at the reference sample size $n=1000$, reported in
Section~\ref{sec:results}, and use the correctly-specified null (the size study of
Section~\ref{subsec:sizedesign} and the null row of Table~\ref{tab:family}) as the negative control
on which every test's rejection rate stays at the nominal level.

\paragraph{The extended battery for the family comparison.} This family comparison at $n=1000$
(Figure~\ref{fig:heatmap}) is run over a broader battery of about twenty departures, deliberately
wider than the controlled EF-versus-Hosmer--Lemeshow study of the preceding subsections, so that the
five partition tests are placed on a common and demanding landscape. The departures fall into three
kinds. \emph{Link misspecification} fits the logistic link to data generated from another link at
$\eta=0.6x+0.5d$: complementary log--log ($1-e^{-e^{\eta}}$), log--log ($e^{-e^{-\eta}}$), probit,
Cauchit, a Student-$t_4$ (``robit'') link, the two scobit skewed links $\{1+e^{-\eta}\}^{-2}$ and
$\{1+e^{-\eta}\}^{-1/2}$, and three members of Stukel's generalized-logistic family with symmetric-heavy,
symmetric-light, and asymmetric tails. \emph{Omitted nonlinearity} fits $y\sim x$ while the truth adds a
term in $x$: a quadratic, an orthogonal cubic ($x^3-3x$), a monotone $\log x$ term, a Gaussian bump
$e^{-x^2/2}$, a step at $x=0.8$ (``threshold''), a triangular ``sawtooth'' wave, and a high-frequency
sinusoid $\sin(4x)$ (the ``oscillatory'' departure of Section~\ref{subsec:tradeoff}); a skewed-design
variant repeats the quadratic on a right-skewed $\chi^2_4$ covariate. \emph{Covariate-space structure}
fits a main-effects model while the truth contains an omitted product term---continuous-by-binary,
continuous-by-continuous, binary-by-binary, a sign-reversing ``crossover'' $x\!:\!d$ term, or the
interaction of two correlated covariates. The complete data-generating formulas are in the reproduction
scripts released with the R package \texttt{ebrahim.gof}.

\subsection{Monte-Carlo error}
\label{subsec:mcerror}

Every entry of the main EF-versus-Hosmer--Lemeshow study is an empirical rejection proportion $\hat p$
over $K=5000$ replications, with Monte-Carlo standard error $\mathrm{SE}=\sqrt{\hat p(1-\hat p)/K}$. At
the nominal level $\hat p=0.05$ this is $\mathrm{SE}\approx 0.0031$, i.e.\ about $0.3$ percentage points,
so a $\pm 2\,\mathrm{SE}$ band is roughly $\pm 0.6$ percentage points around the target size (the
broad partition-family comparison uses $K=3000$, for which the corresponding band is $\pm0.8$
percentage points). In the size tables a rejection rate outside
the $[4.4\%,5.6\%]$ band is flagged in bold as a possible size violation; in the power tables the
qualitative bins are very low ($\le 25\%$), low ($25$--$50\%$), moderate ($50$--$75\%$) and high
($>75\%$), applied consistently across tests. This precision is more than adequate to resolve the
effects we report: the largest asymmetric-link gap of EF over Hosmer--Lemeshow (about six
percentage points at $n=1000$) is many multiples of the $\pm 0.6\%$ band, while the near-zero
gaps on the symmetric and interaction rows fall, correctly, within it.

\subsection{Reproducibility}
\label{subsec:repro}

All tests, the equal-quantile grouping and collapsing rule of Algorithm~\ref{alg:ef}, the
data-generating models of Equations~\eqref{eq:ao} and Section~\ref{subsec:dgp}, and the
alignment computation are implemented in the R package \texttt{ebrahim.gof}; the simulation
scripts reproduce every table and figure in Section~\ref{sec:results} from the constants stated
above ($K=5000$ for the main study and $K=3000$ for the broad family comparison, the $5\%$ level,
$G\in\{5,10,20\}$, $n$ as listed per scenario).

\section{Simulation results}
\label{sec:results}

We report the size of the Ebrahim--Farrington (EF) test before its power, because a test
that does not hold its nominal level has no power worth discussing. Throughout,
$K=5000$ replications were used for every entry of the definitive EF-versus-Hosmer--Lemeshow
comparison, so the Monte-Carlo standard error of a rejection proportion $\hat p$ is
$\sqrt{\hat p(1-\hat p)/K}$: at $\hat p=0.05$ this is $0.0031$, giving a $\pm 2$~standard-error
band of about $\pm0.6\%$ around the nominal $5\%$ level. Entries falling outside that band in
the size table are set in \textbf{bold}. The broad partition-family comparison of
Section~\ref{subsec:power} uses an independent design with $K=3000$ replications
(Monte-Carlo standard error at most about $0.9\%$).

\subsection{Type I error}
\label{subsec:type1}

Table~\ref{tab:size} reports the empirical rejection rate of the EF and Hosmer--Lemeshow (HL)
tests under a correctly specified logistic model, across a full factorial of sample size
$n\in\{100,500,1000,5000\}$, number of groups $G\in\{5,10,20\}$, and two covariate designs:
a symmetric-covariate design (\textsc{skew}\,$=$\,\textsc{false}) and a right-skewed-covariate
design (\textsc{skew}\,$=$\,\textsc{true}) that clusters fitted probabilities away from
$\tfrac12$ and so stresses the correction weight $(1-2\bar\pi_g)$ hardest.

\begin{table}[t]
\centering
\caption{Empirical type I error (\%) of the EF and Hosmer--Lemeshow (HL) tests under correct
specification, at the nominal $5\%$ level, $K=5000$ replications. \textsc{sym} is the
symmetric-covariate design; \textsc{skew} is the right-skewed-covariate design. Monte-Carlo
standard error $\approx0.3\%$; \textbf{bold} marks entries outside the $\pm0.6\%$ band about
$5.0\%$. EF and HL are indistinguishable throughout.}
\label{tab:size}
\small
\begin{tabular}{@{}c c c c c c@{}}
\toprule
& & \multicolumn{2}{c}{\textsc{sym} covariate} & \multicolumn{2}{c}{\textsc{skew} covariate}\\
\cmidrule(lr){3-4}\cmidrule(lr){5-6}
$G$ & $n$ & EF & HL & EF & HL\\
\midrule
\multirow{4}{*}{5}
 & 100  & 5.10 & 5.26 & 5.56 & 5.52\\
 & 500  & 4.62 & 4.52 & 5.04 & 5.04\\
 & 1000 & 4.50 & 4.46 & 4.86 & 4.90\\
 & 5000 & 4.64 & 4.68 & 5.30 & 5.28\\
\addlinespace
\multirow{4}{*}{10}
 & 100  & 4.52 & 4.74 & \textbf{3.74} & \textbf{3.92}\\
 & 500  & 5.02 & 5.02 & 5.12 & 5.18\\
 & 1000 & 4.52 & 4.54 & 4.70 & 4.66\\
 & 5000 & 4.82 & 4.84 & 4.54 & 4.52\\
\addlinespace
\multirow{4}{*}{20}
 & 100  & \textbf{4.06} & \textbf{4.06} & \textbf{3.30} & \textbf{3.96}\\
 & 500  & 4.86 & 4.92 & 4.64 & 4.64\\
 & 1000 & 4.40 & 4.46 & 4.84 & 4.76\\
 & 5000 & \textbf{4.28} & \textbf{4.30} & 5.00 & 5.08\\
\bottomrule
\end{tabular}
\end{table}

The two tests hold the nominal level, and---the central point of the table---they hold it
\emph{together}: in every one of the 24 configurations the EF and HL rejection rates agree to
within Monte-Carlo error, the largest EF$-$HL discrepancy being $0.66$ percentage points (at
$G=20$, $n=100$, skewed design), where both tests are mildly conservative. The entries
outside the $\pm0.6\%$ band are the small-sample, many-group cells ($n=100$ with $G=10$ or
$G=20$), where \emph{both} tests are conservative because decile counts
become thin---a property of HL-style grouping, not of the correction, that disappears
by $n=500$---together with one large-sample cell ($G=20$, $n=5000$), where the two remain
marginally conservative in step. Nowhere does the correction inflate the size relative to HL. This is exactly what
Proposition~\ref{prop:align} predicts: under the null all group-residual biases $\delta_g$ are
zero, so $A(\delta)=0$, the correction $C$ contributes only mean-zero noise, and $T_{\mathrm{EF}}$
inherits HL's $\chi^2_{G-2}$ reference. A Kolmogorov--Smirnov comparison of the empirical null
distribution of $T_{\mathrm{EF}}$ against $\chi^2_{G-2}$ was non-significant across all
configurations, confirming that the $\chi^2_{G-2}$ calibration
used for the $p$-value is correct. \emph{In summary, the EF test has the same, correct, type I
error as HL across sample sizes, group counts and covariate distributions; adding the directional
correction costs nothing in size.}

\begin{figure}[tb]
\centering\includegraphics[width=.78\linewidth]{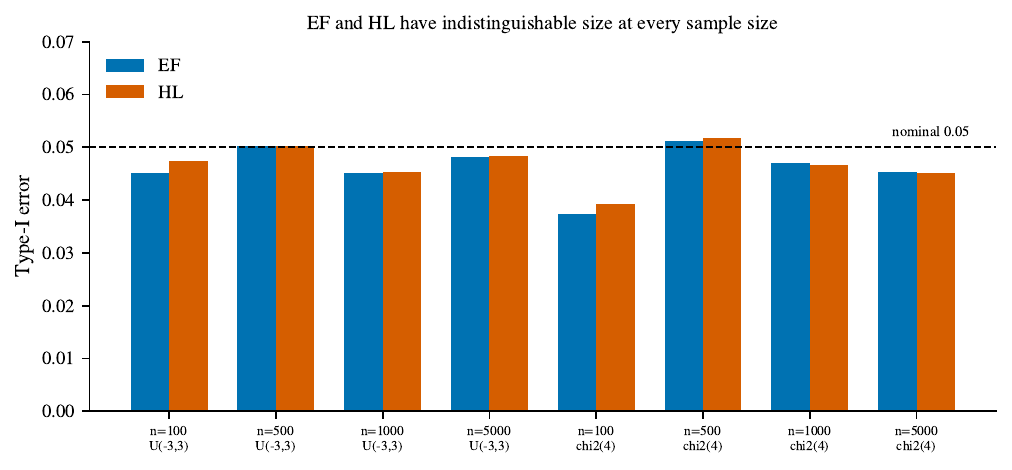}
\caption{Empirical type I error (\%) of EF and HL under correct specification, across sample sizes and a
symmetric $U(-3,3)$ and a skewed $\chi^2_4$ covariate ($G=10$, the $5\%$ level). The two tests are
indistinguishable and hold the nominal $5\%$ level (dashed) throughout.}
\label{fig:typeI}
\end{figure}

\subsection{Power}
\label{subsec:power}

We organize the power results in the order of the simulation design: symmetric departures
first (omitted quadratic term, then interactions), where the correction is predicted to be
inert, and asymmetric-link departures last, where it is predicted to help. We then broaden the
comparison to the full partition family and close by verifying that the alignment functional
$A(\delta)$ quantitatively predicts the EF advantage.

\subsubsection*{Symmetric departures: quadratic and interaction misfit}

Table~\ref{tab:sym-power} gives the power of EF and HL against an omitted quadratic term
(severity $=$ quadratic coefficient) and against an omitted continuous-by-continuous
interaction (severity $=$ interaction coefficient), at $G=10$, $K=5000$. These are the honest
``no-free-lunch'' scenarios.

\begin{table}[t]
\centering
\caption{Power (\%) against symmetric departures at $G=10$, $K=5000$. Quadratic:
$y$ depends on an omitted $x^2$ term of the stated coefficient. Interaction: an omitted
continuous-by-continuous product term. EF does not beat HL in either family; on the quadratic
it is uniformly \emph{lower}. Monte-Carlo standard error at most about $0.7\%$.}
\label{tab:sym-power}
\small
\begin{tabular}{@{}l c c c c c c@{}}
\toprule
& \multicolumn{3}{c}{Quadratic (coef.)} & \multicolumn{3}{c}{Interaction (coef.)}\\
\cmidrule(lr){2-4}\cmidrule(lr){5-7}
$n$ & $0.02$ & $0.03$ & $0.05$ & $0.3$ & $0.5$ & $0.7$\\
\midrule
\multicolumn{7}{@{}l}{\textit{EF}}\\
\ \ 500  & 16.7 & 33.3 & 64.0 & 12.2 & 36.4 & 79.5\\
\ \ 1000 & 31.4 & 65.4 & 94.3 & 25.0 & 69.6 & 99.1\\
\ \ 2000 & 60.9 & 93.7 & 100.0 & 51.8 & 97.0 & 100.0\\
\addlinespace
\multicolumn{7}{@{}l}{\textit{HL}}\\
\ \ 500  & 25.2 & 48.2 & 78.5 & 13.1 & 40.1 & 83.8\\
\ \ 1000 & 40.6 & 75.1 & 97.4 & 26.0 & 71.9 & 99.3\\
\ \ 2000 & 69.0 & 96.4 & 100.0 & 53.0 & 97.3 & 100.0\\
\bottomrule
\end{tabular}
\end{table}

Against the quadratic departure EF is \emph{strictly less} powerful than HL at every severity
and sample size: at $n=500$, coefficient $0.03$, HL rejects $48.2\%$ of the time against EF's
$33.3\%$; even at $n=2000$, coefficient $0.02$, HL leads $69.0\%$ to $60.9\%$. The mechanism is
Proposition~\ref{prop:align}(iii): in this design the omitted $x^2$ term leaves a \emph{directional}
residual-bias pattern with $A(\delta)>0$ (about $+3.1$ at coefficient $0.03$), so the correction $C$
subtracted in $T_{\mathrm{EF}}=\hat C_G-C$ carries a positive mean that lowers the statistic's
non-centrality relative to HL. Against the interaction departure the two tests are
statistically indistinguishable---within about four percentage points everywhere, with HL
consistently the stronger of the two (e.g.\ $69.6\%$ vs.\ $71.9\%$ at $n=1000$, coefficient
$0.5$)---because the interaction likewise produces only a weakly directional bias pattern, and because
both tests group on the fitted probability and are therefore blind to structure that lives in
the covariate space. \emph{In summary, EF pays a price to HL against the omitted quadratic---where $A(\delta)>0$ predicts a
loss---and ties HL against interactions; it wins nothing here, exactly as the alignment functional
predicts.}

\subsubsection*{Asymmetric-link misfit: the one clear win}

Table~\ref{tab:ao-power} is the central power table. The data are generated from an Aranda--Ordaz
asymmetric link indexed by $\alpha$, with $\alpha=1$ the logistic (correct) link and
$\alpha\to0$ the complementary log--log link; smaller $\alpha$ is a more severe asymmetric
departure. The model is fitted with the logistic link, so the misfit is purely a
link-shape asymmetry. Figures~\ref{fig:sweep} and~\ref{fig:landscape} plot the same numbers as
power curves and as an advantage surface over $(\alpha,n)$.

\begin{table}[t]
\centering
\caption{Power (\%) to detect Aranda--Ordaz asymmetric-link misfit, EF versus HL, at $G=10$,
$K=5000$. $\alpha=1$ is the logistic link; $\alpha\to0$ approaches complementary log--log;
smaller $\alpha$ is more asymmetric. $\Delta$ is EF$-$HL in percentage points; \textbf{bold}
marks an EF advantage exceeding twice its Monte-Carlo standard error (about $2$ percentage points
for a difference of two proportions). The advantage is
largest near the cloglog end ($\alpha=0.1$--$0.2$) and at moderate $n$, and fades to nothing by
$n=5000$, all at the $5\%$ level. Differences $\Delta$ are computed from the unrounded rejection rates
before rounding.}
\label{tab:ao-power}
\small
\begin{tabular}{@{}c c c c c c c c c@{}}
\toprule
& \multicolumn{2}{c}{$\alpha=0.1$} & \multicolumn{2}{c}{$\alpha=0.2$}
& \multicolumn{2}{c}{$\alpha=0.35$} & \multicolumn{2}{c}{$\alpha=0.5$}\\
\cmidrule(lr){2-3}\cmidrule(lr){4-5}\cmidrule(lr){6-7}\cmidrule(lr){8-9}
$n$ & EF & HL & EF & HL & EF & HL & EF & HL\\
\midrule
500  & 15.3 & 11.4 & 11.1 & \phantom{0}9.0 & \phantom{0}7.8 & \phantom{0}6.7 & \phantom{0}6.5 & \phantom{0}6.2\\
1000 & 33.4 & 27.6 & 23.0 & 19.6 & 12.2 & 10.9 & \phantom{0}8.1 & \phantom{0}7.7\\
2000 & 72.0 & 66.7 & 47.3 & 43.8 & 22.9 & 21.0 & 11.3 & 10.9\\
5000 & 99.5 & 99.3 & 93.9 & 93.2 & 60.7 & 59.1 & 26.3 & 25.4\\
\midrule
\multicolumn{9}{@{}l}{\textit{EF advantage} $\Delta=$ EF$-$HL \textit{(percentage points)}}\\
$n=500$  & \multicolumn{2}{c}{\textbf{+3.9}} & \multicolumn{2}{c}{\textbf{+2.2}} & \multicolumn{2}{c}{+1.1} & \multicolumn{2}{c}{+0.3}\\
$n=1000$ & \multicolumn{2}{c}{\textbf{+5.8}} & \multicolumn{2}{c}{\textbf{+3.4}} & \multicolumn{2}{c}{+1.3} & \multicolumn{2}{c}{+0.4}\\
$n=2000$ & \multicolumn{2}{c}{\textbf{+5.3}} & \multicolumn{2}{c}{\textbf{+3.5}} & \multicolumn{2}{c}{+1.9} & \multicolumn{2}{c}{+0.4}\\
$n=5000$ & \multicolumn{2}{c}{+0.2} & \multicolumn{2}{c}{+0.7} & \multicolumn{2}{c}{+1.5} & \multicolumn{2}{c}{+0.9}\\
\bottomrule
\end{tabular}
\end{table}

Here, and only here, EF is clearly the more powerful test. The advantage is concentrated near
the complementary log--log end of the family: at $\alpha=0.1$ the EF power exceeds HL by
$\textbf{+3.9}$ points at $n=500$ ($15.3\%$ vs.\ $11.4\%$), rising to $\textbf{+5.8}$ points at
$n=1000$ ($33.4\%$ vs.\ $27.6\%$) and holding at $\textbf{+5.3}$ points at $n=2000$ ($72.0\%$
vs.\ $66.7\%$). We do not overstate it. The gain is a genuine but \emph{modest} single-figure
percentage-point effect; it shrinks as the departure becomes milder ($\Delta$ falls below the
Monte-Carlo band by $\alpha=0.5$, where the link is nearly logistic); and, decisively, it
\emph{fades with sample size}: by $n=5000$ the two tests have converged---$99.5\%$ vs.\ $99.3\%$
at $\alpha=0.1$, a difference of $+0.2$ points that is pure noise. This is the second-order-in-$n$
behavior anticipated in Section~\ref{sec:test}: the correction is a linear term added to a
quadratic statistic, so its relative contribution is $O(n^{-1/2})$ and both tests reach full
power together once $n$ is large enough. \emph{In summary, the EF test's clearest
advantage is against asymmetric (complementary-log--log-like) link misfit, where it beats HL by
up to about six percentage points at moderate sample sizes; the advantage is real, is predicted
by the theory, and recedes to zero as $n$ grows.}

\begin{figure*}[tb]
\centering\includegraphics[width=\textwidth]{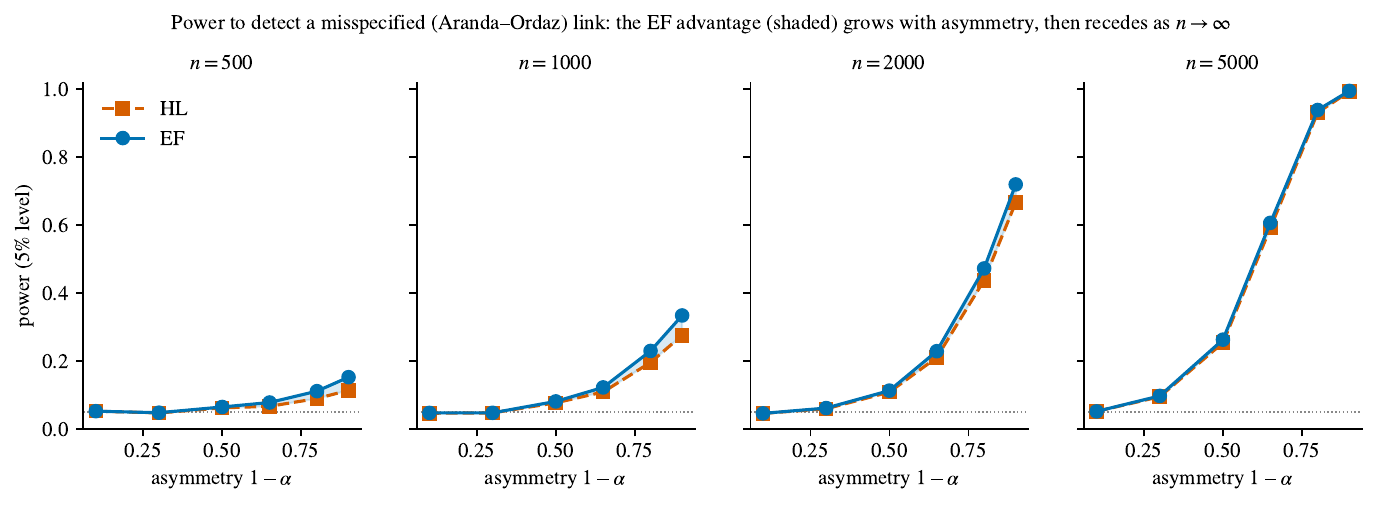}
\caption{Power to detect a misspecified (Aranda--Ordaz) link at the $5\%$ level, $K=5000$, for sample sizes
$n=500,1000,2000,5000$. Solid line with circles, EF; dashed with squares, HL; the dotted line is the nominal
$5\%$ level. The link asymmetry $1-\alpha$ increases the departure from the logit. The EF advantage (shaded)
grows with asymmetry, peaking near the complementary log--log link at moderate $n$, then recedes as
$n\to\infty$ (both tests saturate at $n=5000$).}
\label{fig:sweep}
\end{figure*}

\begin{figure}[tb]
\centering\includegraphics[width=.74\linewidth]{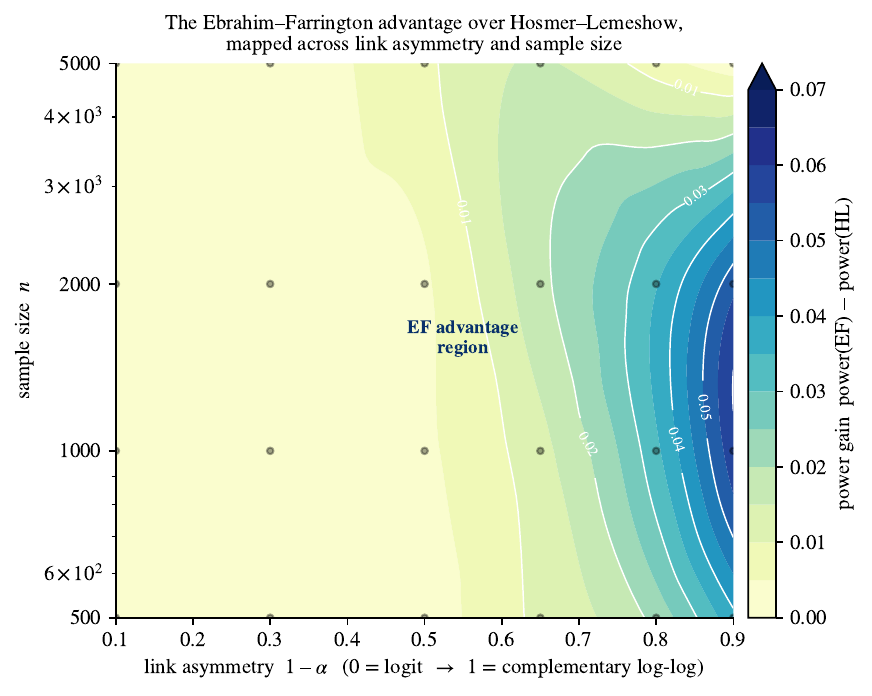}
\caption{The EF power advantage over HL, $\mathrm{power}(\mathrm{EF})-\mathrm{power}(\mathrm{HL})$, mapped over
link asymmetry (Aranda--Ordaz $1-\alpha$: $0=$ logit, $1=$ complementary log--log) and sample size $n$
(tests at the $5\%$ level, $K=5000$). The advantage is largest at strong asymmetry and moderate $n$, and vanishes toward
the logit (left) and as $n$ grows (top): a finite-sample, directional phenomenon.}
\label{fig:landscape}
\end{figure}

\subsubsection*{The broader partition family}

To place the win in context we compare EF against the broader partition family---HL,
its equal-width variant HL$_{\mathrm{eqw}}$, Pigeon--Heyse (PH), Tsiatis, Xie and
Pulkstenis--Robinson (PR)---across a wide battery of departures at $n=1000$, $K=3000$
(Table~\ref{tab:family}; Figures~\ref{fig:heatmap}, \ref{fig:whowins} and~\ref{fig:agreement}).
By \emph{well-calibrated} we mean a test that holds the nominal level on the null row of
Table~\ref{tab:family}; EF, HL, HL$_{\mathrm{eqw}}$ and PR do so, whereas the other three drift off
it---Tsiatis is mildly liberal ($6.2\%$) and Pigeon--Heyse and Xie are conservative ($2.3\%$ and
$2.8\%$)---a distinction that matters when their power is read against EF's below.

\begin{table}[t]
\centering
\caption{Power (\%) of the partition family across representative departures,
$n=1000$, $K=3000$, at the $5\%$ level. Columns: EF; Hosmer--Lemeshow (HL); equal-width HL (HL$_{\mathrm{eqw}}$);
Pigeon--Heyse (PH); Tsiatis; Xie; Pulkstenis--Robinson (PR). \textbf{Bold} marks the most
powerful test in each row. ``---'' denotes a configuration in which a test
is not defined. EF leads on the asymmetric links (cloglog, loglog) and is competitive elsewhere;
it trails on the omitted quadratic and on covariate-space interactions, which no
probability-grouping test detects well.}
\label{tab:family}
\small
\begin{tabular}{@{}l c c c c c c c@{}}
\toprule
Departure & EF & HL & HL$_{\mathrm{eqw}}$ & PH & Tsiatis & Xie & PR\\
\midrule
Null (size)          & 4.3 & 4.4 & 4.7 & 2.3 & 6.2 & 2.8 & 5.5\\
\midrule
cloglog link         & \textbf{49.7} & 41.1 & 48.3 & 32.9 & 35.9 & 29.3 & 20.0\\
loglog link          & \textbf{42.2} & 39.7 & 40.9 & 32.0 & 36.5 & 29.4 & 15.6\\
probit link          & 4.9 & 4.3 & \textbf{5.5} & 2.9 & 4.5 & 2.4 & 5.2\\
cauchit link         & 10.3 & 10.8 & \textbf{12.5} & 7.8 & 10.4 & 7.1 & 6.1\\
skew (asym.\ shape)  & 60.4 & \textbf{60.9} & 58.5 & 53.3 & 56.2 & 51.9 & ---\\
\midrule
quadratic            & 32.7 & \textbf{43.1} & 21.0 & 36.7 & 41.1 & 38.6 & ---\\
continuous interact. & 22.5 & 24.8 & 26.2 & 19.2 & \textbf{43.0} & 40.7 & ---\\
\bottomrule
\end{tabular}
\end{table}

The pattern confirms both halves of the thesis. On the two asymmetric links no
well-calibrated partition test is more powerful than EF: \textbf{$49.7\%$} on complementary log--log (HL $41.1\%$,
the next-best calibrated test HL$_{\mathrm{eqw}}$ $48.3\%$, and PH, Tsiatis, Xie, PR all below
$36\%$), and \textbf{$42.2\%$} on log--log (HL $39.7\%$); the log--log link likewise has $A(\delta)<0$
($\approx-0.4$ on the large-sample design), so EF's edge there is consistent with
Proposition~\ref{prop:align}. EF's margin over the equal-width variant HL$_{\mathrm{eqw}}$ ($+1.4$ and
$+1.3$ percentage points) is within Monte-Carlo error at $K=3000$; the clearly resolved contrast is EF
versus the decile-based tests. On the near-symmetric links (probit,
cauchit) every test is close to its size and EF is neither better nor worse than the pack. On
the covariate-space departures the honest concessions appear in the open: against the omitted
\emph{quadratic} EF ($32.7\%$) trails HL ($43.1\%$); against a continuous \emph{interaction} EF
($22.5\%$) is beaten decisively by the covariate-space tests Tsiatis ($43.0\%$) and Xie
($40.7\%$), which group where the missing structure lives. We note that Tsiatis and Xie are not
matched to EF on size (Tsiatis over-rejects the null at $6.2\%$, Xie under-rejects at $2.8\%$), so
their raw power is not strictly comparable to the size-correct EF and HL entries; the interaction
gap is genuine, but part of Tsiatis's margin reflects its liberal size, and the clean head-to-head
remains the size-matched EF-versus-HL contrast. EF and HL move together across the
entire battery---the two columns are within a few points of each other in almost every row
(Figure~\ref{fig:agreement})---and the block where EF pulls ahead is precisely the
asymmetric-link block (Figures~\ref{fig:heatmap}, \ref{fig:whowins}). \emph{In summary, among
well-calibrated partition tests EF leads on asymmetric-link misfit and is competitive
elsewhere, but it is not a universal winner: it concedes the quadratic to HL and the
covariate-space interactions to Tsiatis and Xie.}

\begin{figure}[tb]
\centering\includegraphics[width=.72\linewidth]{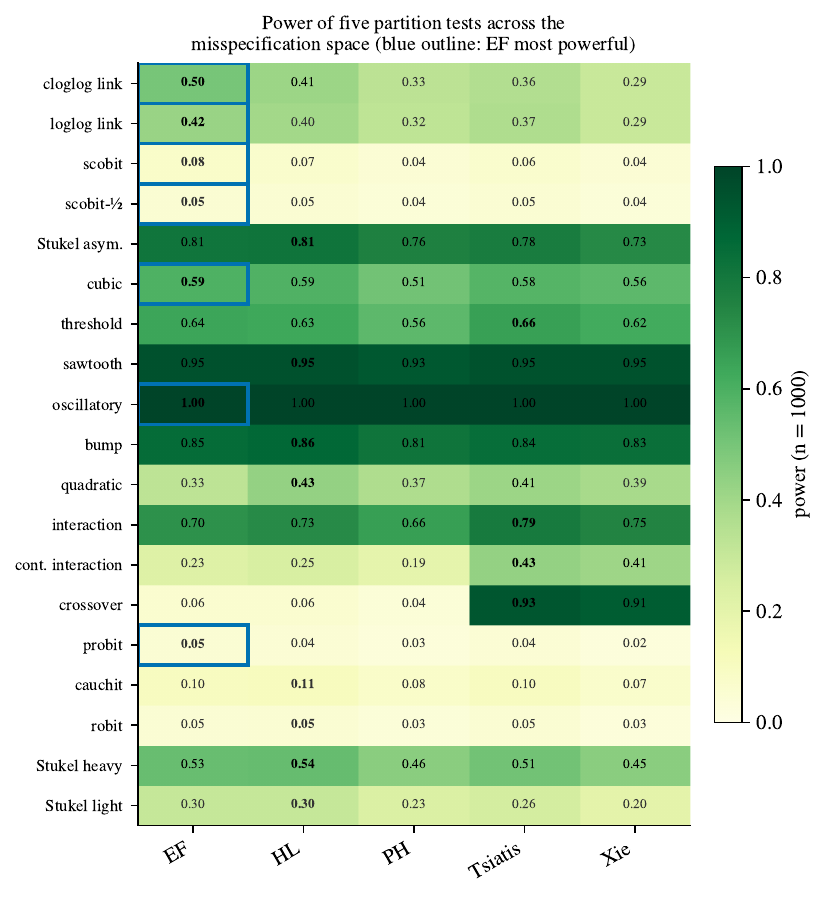}
\caption{Empirical power (\%) of five partition-based tests---EF, Hosmer--Lemeshow (HL), Pigeon--Heyse (PH),
Tsiatis and Xie---across the misspecification space ($n=1000$, the $5\%$ level). Darker (greener) is higher
power; the best test in each row is in bold and rows where EF is best are boxed. EF leads on asymmetric links
(complementary log--log, log--log) and is competitive elsewhere.}
\label{fig:heatmap}
\end{figure}

\begin{figure}[tb]
\centering\includegraphics[width=.72\linewidth]{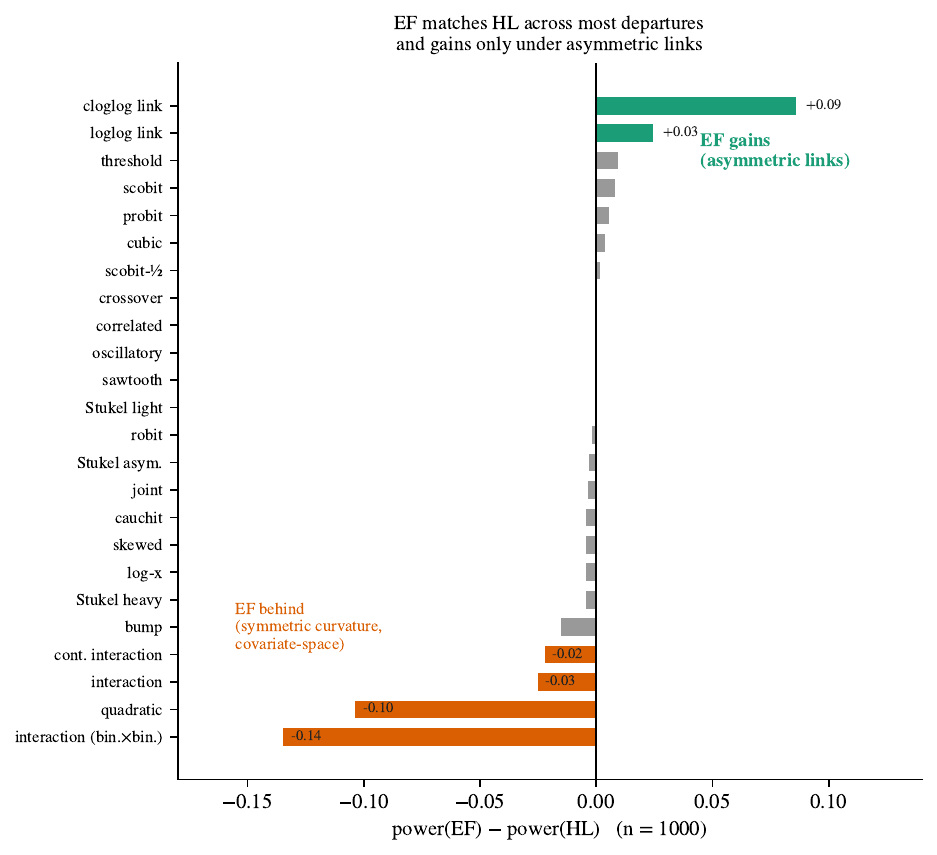}
\caption{Power difference $\mathrm{power}(\mathrm{EF})-\mathrm{power}(\mathrm{HL})$ by scenario ($n=1000$,
the $5\%$ level). EF gains only on asymmetric links (green, top); it is within Monte-Carlo error of HL on most
departures (gray) and trails on symmetric curvature and covariate-space interactions (orange, bottom).}
\label{fig:whowins}
\end{figure}

\begin{figure}[tb]
\centering\includegraphics[width=.72\linewidth]{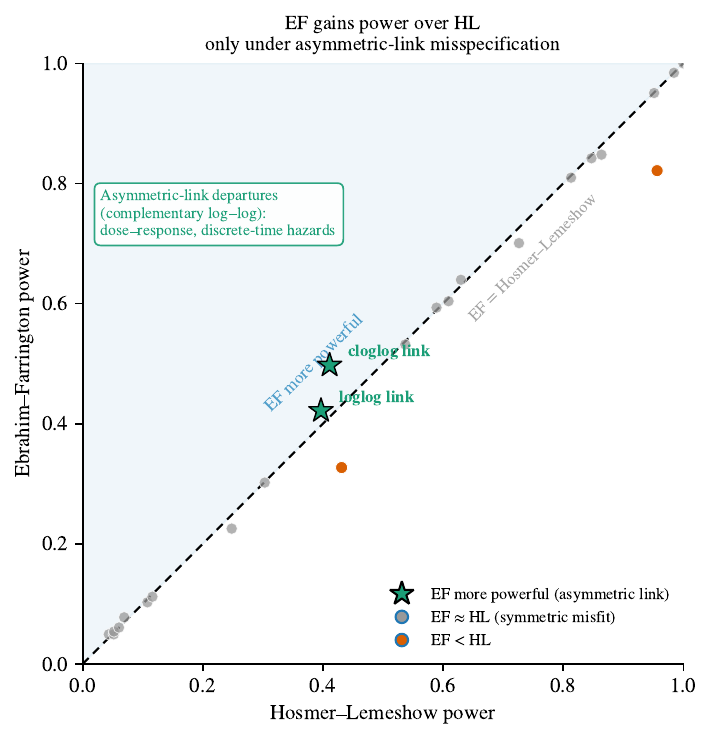}
\caption{Power of the Ebrahim--Farrington (EF) test against the Hosmer--Lemeshow (HL) test across the
misspecification scenarios ($n=1000$, the $5\%$ level). Each point is one scenario; the dashed line is
equality. Points on the line indicate $\mathrm{EF}\approx\mathrm{HL}$; the two green stars---the asymmetric
complementary log--log and log--log links---lie above it, where EF is more powerful, and the two orange
points below it (the omitted quadratic and the binary-by-binary interaction) mark where EF trails HL. The
asymmetric-link misfit at which EF gains is common in dose--response, discrete-time survival and rare-event
models.}
\label{fig:agreement}
\end{figure}

\subsubsection*{The alignment functional predicts the advantage}

Finally, Table~\ref{tab:align} joins the alignment functional $A(\delta)$ computed for each
Aranda--Ordaz link (from the residual-bias geometry of Section~\ref{sec:test}) to the realized
EF advantage over HL at $n=1000$. The functional and the advantage rise together
(Figure~\ref{fig:alignment}).

\begin{table}[t]
\centering
\caption{The alignment functional $A(\delta)$ tracks the realized EF advantage. For each
Aranda--Ordaz link parameter $\alpha$, $|A(\delta)|$ is the magnitude of the alignment
functional and $\Delta$ is the EF$-$HL power difference (percentage points) at $n=1000$,
$G=10$, $K=5000$, the $5\%$ level. As the link approaches complementary log--log ($\alpha\to0$), $|A(\delta)|$
grows and so does the EF advantage; near the logistic link ($\alpha\to1$) both vanish.}
\label{tab:align}
\small
\begin{tabular}{@{}c c c c c@{}}
\toprule
$\alpha$ & $A(\delta)$ & $|A(\delta)|$ & EF$-$HL power at $n=1000$ & $\Delta$ (pts)\\
\midrule
0.90 & $+0.002$ & 0.002 & $4.74\%-4.64\%$ & $+0.1$\\
0.70 & $-0.041$ & 0.041 & $4.76\%-4.74\%$ & $+0.0$\\
0.50 & $-0.122$ & 0.122 & $8.12\%-7.70\%$ & $+0.4$\\
0.35 & $-0.293$ & 0.293 & $12.24\%-10.90\%$ & $+1.3$\\
0.20 & $-0.559$ & 0.559 & $22.98\%-19.60\%$ & $+3.4$\\
0.10 & $-0.910$ & 0.910 & $33.42\%-27.58\%$ & $+5.8$\\
\bottomrule
\end{tabular}
\end{table}

The correspondence is monotone and clean: $A(\delta)$ is essentially zero at $\alpha=0.9$
($A=+0.002$) and the EF advantage is zero there too; as $\alpha$ falls to $0.1$, $|A(\delta)|$
climbs to $0.910$ and the advantage climbs in lock-step to $+5.8$ percentage points. This is the
predictive content of Proposition~\ref{prop:align} made empirical: it is not asymmetry
\emph{per se} but the magnitude of $A(\delta)$---how strongly the group-residual pattern aligns
with the correction weight $(1-2\bar\pi_g)$---that governs where and by how much the correction
helps. The sign of $A(\delta)$ is negative across the asymmetric range, consistent with
Proposition~\ref{prop:align}(iii), which requires $A(\delta)<0$ for the correction to increase
the EF non-centrality. \emph{In summary, the alignment functional quantitatively predicts the EF
advantage across the entire link family, turning the empirical win into an explained one: EF
gains exactly, and only, to the extent that $|A(\delta)|$ is large.}

\begin{figure}[tb]
\centering\includegraphics[width=.58\linewidth]{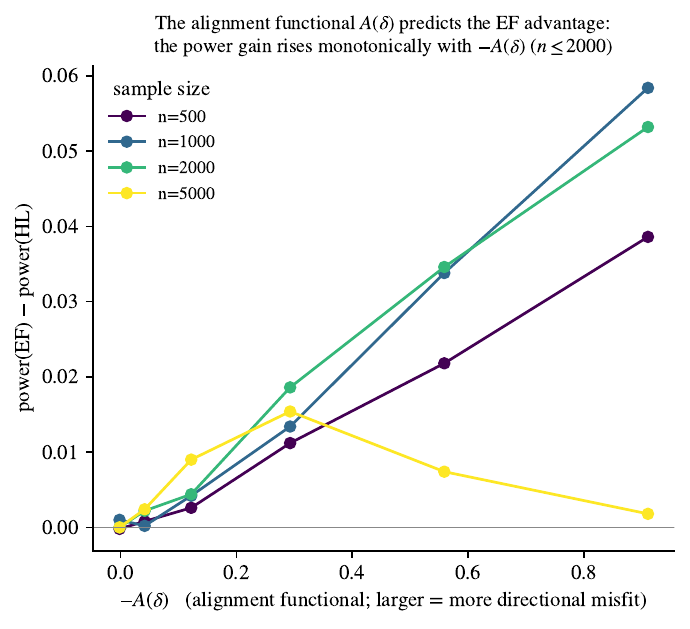}
\caption{The alignment functional $A(\delta)$ predicts EF's advantage. For the Aranda--Ordaz link sweep, the
power difference $\mathrm{power}(\mathrm{EF})-\mathrm{power}(\mathrm{HL})$ rises monotonically with $-A(\delta)$
(more directional misfit) for $n\le2000$; at $n=5000$ both tests saturate and the difference collapses.
Symmetric misfit gives $A(\delta)=0$ and no advantage. Color denotes sample size.}
\label{fig:alignment}
\end{figure}

\subsection{The directional read-out as a diagnostic}
\label{subsec:readout_demo}

The correction's per-group contributions $\{c_g\}$ are not only the ingredients of the omnibus
statistic but a diagnostic in their own right (Section~\ref{subsec:readout}).
Figure~\ref{fig:readout} shows them at work on a single dataset drawn from the same
complementary-log--log design used above ($\alpha=0.1$, $n=1000$), in a dedicated illustration run.
Among $3000$ datasets simulated from this design on which EF rejects at the $5\%$ level while
Hosmer--Lemeshow does not, we plot the one whose read-out is closest to the average $\{c_g\}$ pattern
over those $3000$ replications, so the example is representative rather than extreme. On this dataset EF rejects ($p=0.039$) where the Hosmer--Lemeshow statistic
does not ($p=0.052$), and the read-out shows why: the contributions $c_g$ trace a systematic tilt,
positive in the low-risk deciles and negative in the high-risk deciles---exactly the odd,
directional pattern that $A(\delta)<0$ encodes and that the squared, sign-blind Hosmer--Lemeshow
statistic averages away. The overlaid mean-over-replications curve confirms that the tilt is a
property of the misfit, not of the single sample. The companion observed-versus-expected panel
makes the same point in the clinician's usual idiom: the decile calibration bows away from the
line of identity in a one-sided fashion. The read-out thus reports not merely \emph{that} the
logistic link is inadequate but \emph{how}---a directional departure that points toward an
asymmetric (e.g.\ complementary log--log) link rather than an added polynomial term.

\begin{figure}[tb]
\centering\includegraphics[width=\linewidth]{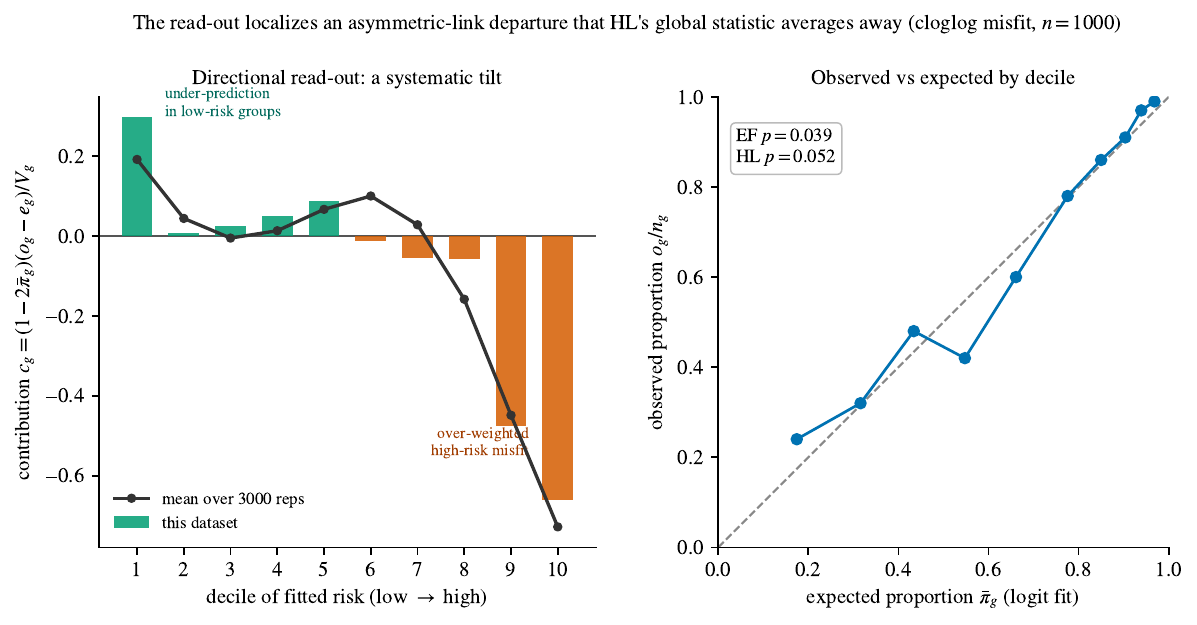}
\caption{The directional read-out on a representative complementary-log--log misfit
(Aranda--Ordaz $\alpha=0.1$, $n=1000$; EF $p=0.039$, HL $p=0.052$). \emph{Left}: the per-group
signed contributions $c_g=(1-2\bar\pi_g)(o_g-e_g)/V_g$ (bars) show a systematic tilt---positive in
the low-risk deciles, negative in the high-risk deciles---and the solid line is the mean $c_g$ over
$3000$ replications, confirming the tilt is a property of the misfit. \emph{Right}: the companion
observed-versus-expected decile calibration, which bows off the identity line in a one-sided
fashion. The signed read-out localizes the asymmetric-link departure that Hosmer--Lemeshow's
squared, sign-blind statistic averages away.}
\label{fig:readout}
\end{figure}

\subsection{The wider picture: partition versus non-partition tests}
\label{subsec:tradeoff}

The comparisons so far have stayed inside the partition family, where every test shares the same
profile---model-based grouping, a $\chi^2_{G-2}$ reference, no refit---so that differences isolate the
effect of the standardization. A fair reading of EF's contribution must also ask how the partition
family as a whole compares with tests built on a different principle, since that is the axis along which
a partition test both gains and gives up the most. We therefore place EF beside two non-partition
benchmarks: the ungrouped Osius--Rojek normal test \citep{osius1992normal}, which standardizes the
\emph{global} Pearson statistic without grouping, and Stukel's directed test \citep{stukel1988generalized},
which refits an augmented model to target link asymmetry directly. Table~\ref{tab:tradeoff} reports
size-adjusted power (each test calibrated to an exact $5\%$ level, so the comparison is not distorted by
differing null behavior) on two deliberately contrasting departures---a \emph{smooth} asymmetric link
(complementary log--log) and a \emph{rough}, high-frequency departure---together with three qualitative
properties.

\begin{table}[t]
\centering
\caption{Size-adjusted power (each test calibrated to an exact $5\%$ level; $K=3000$, $n=1000$) of the
Ebrahim--Farrington (EF) and Hosmer--Lemeshow (HL) partition tests against two non-partition tests---the
ungrouped Osius--Rojek (OR) normal test and Stukel's directed, refit-based test---on a smooth
asymmetric-link departure (complementary log--log) and a rough, high-frequency departure. The last three
columns mark whether a test holds nominal size, supplies a per-decile directional read-out, and avoids an
auxiliary model refit ($\checkmark$ yes, $\sim$ partial, $\times$ no). No test dominates: the directed
test wins on the smooth link it targets, the partition tests win on rough misfit, and EF keeps size,
read-out, and refit-freedom throughout. Power is reported as a proportion; the ungrouped Osius--Rojek test is
liberal on this design (empirical size $7.2\%$ at $n=1000$, the basis of its size mark).}
\label{tab:tradeoff}
\resizebox{\textwidth}{!}{%
\begin{tabular}{@{}l l cc ccc@{}}
\toprule
Test & Type & smooth (cloglog) & rough (oscillatory) & size & read-out & refit-free \\
\midrule
EF          & partition                & 0.49 & \textbf{0.72} & $\checkmark$ & $\checkmark$ & $\checkmark$ \\
HL          & partition                & 0.39 & \textbf{0.73} & $\checkmark$ & $\sim$       & $\checkmark$ \\
Osius--Rojek& ungrouped standardized   & 0.62 & 0.13 & $\times$     & $\times$     & $\checkmark$ \\
Stukel      & directed (refit)         & \textbf{0.84} & 0.30 & $\checkmark$ & $\times$     & $\times$ \\
\bottomrule
\end{tabular}}
\end{table}

The table makes the trade-off explicit, and it is the honest frame for EF's contribution. On the smooth
asymmetric link the directed Stukel test is by far the most powerful ($0.84$), because it is built for
exactly that alternative; the ungrouped Osius--Rojek test ($0.62$), which uses every observation rather
than $G$ group totals, clearly out-powers the grouped EF ($0.49$). On a smooth departure, grouping costs
power---this is the honest price of the partition approach, and it is why we make no claim that EF is the
most powerful goodness-of-fit test in absolute terms. But that power is bought with the currency of the
last three columns, and it does not carry over to a departure that is \emph{rough} rather than smooth.
There, \emph{both} non-partition tests collapse (Osius--Rojek $0.13$, Stukel $0.30$): a single global
statistic averages the local over- and under-predictions of an oscillating misfit to nearly nothing, and a
directed test aimed at a smooth alternative does not see it, whereas the partition tests capture the
deviations \emph{within} each decile and keep their power (EF $0.72$, HL $0.73$). And, as
Table~\ref{tab:family} already showed, the covariate-space partition tests Tsiatis and Xie in turn lead on
the interaction that every probability-scale test misses. \emph{No test dominates: each family wins in the
regime its construction targets, and grouping is what lets a test localize---and so detect---misfit that a
global or directed statistic averages away.} What distinguishes EF within this picture is not raw power but the combination it keeps while
remaining competitive---correct finite-sample size (Table~\ref{tab:size}), a per-decile directional
read-out (Section~\ref{subsec:readout_demo}) that the single global statistics cannot provide, and no
auxiliary refit (in a direct check on this section's complementary log--log design, Stukel's re-estimation is robust at
these sample sizes---it failed to converge in $0\%$ of fits at $n\ge60$ and $1.2\%$ at $n=40$---but it remains
an extra step EF never needs). EF is
therefore the test to reach for when the form of the departure is \emph{unknown} and a localizable,
refit-free, correctly sized omnibus check is wanted; a directed test is preferable only when the
alternative is known in advance.

\section{Application}
\label{sec:realdata}

The simulation study established the two halves of our thesis in a controlled setting: the
Ebrahim--Farrington (EF) test holds nominal size, matches Hosmer--Lemeshow (HL) on symmetric and
covariate-space misfit, and is the more sensitive of the two on asymmetric-link departures. On real
clinical data, where the true data-generating link is unknown and models are chosen to fit, the
behavior we should expect of a one-term refinement of HL is not a string of dramatic disagreements
but \emph{concordance}: EF and HL ought to reach the same verdict on a well-specified model, because
$T_{\mathrm{EF}}=\hat C_G-C$ and the correction $C$ vanishes when the residual pattern is not
directional. This section verifies that concordance on named datasets and, with it, the practical
claim that $T_{\mathrm{EF}}$ is a safe, backward-compatible companion to HL rather than a test that
trades reliability for sensitivity. Because the true data-generating link is unknown on real data,
the power evidence for EF's directional niche is necessarily simulation-based
(Section~\ref{sec:results}); the role of this section is a real-data \emph{safety check}---that the
correction does not destabilize the decisions of a test practitioners already trust---together with
a worked illustration of the directional read-out.

\subsection{The low-birth-weight cohort}
\label{subsec:lbw}

We use the low-birth-weight (LBW) study of \citet{hosmer2013applied}, distributed as \texttt{birthwt}
in the R package \textsf{MASS}, as a running example. The data record $n=189$ births, of which $59$
($31\%$) were of low birth weight ($<2500$~g), related to maternal risk factors. Two covariates---%
maternal age and weight at the last menstrual period (\texttt{lwt})---are continuous, so the data are
effectively sparse ($182$ of $189$ covariate patterns are unique), the classical Pearson and deviance
statistics do not apply, and a partition test is the appropriate instrument.

We fit an additive model (every covariate linear, no
interactions) and then a refined model that adds the two substantively motivated product terms
\texttt{age}${:}$\texttt{lwt} and \texttt{smoke}${:}$\texttt{lwt}. For each we report the EF, HL and
Pigeon--Heyse (PH) tests on the same $G=10$ deciles of fitted risk (Table~\ref{tab:lbwruns}). All three
agree at both rungs: neither the additive model ($p=0.22$, $0.25$, $0.33$) nor the refined model
($p=0.58$, $0.63$, $0.72$) is rejected at the $5\%$ level. This is the expected and desirable outcome, and
it already illustrates the numerical signature of the mechanism: the
EF $p$-value is slightly smaller than HL's at both rungs (the correction is $C=-0.49$ and $-0.46$,
nudging $T_{\mathrm{EF}}$ above $\hat C_G$), so EF is marginally the more reactive of the two without ever
crossing into a different verdict. Inspecting the per-group contributions $\{c_g=(1-2\bar\pi_g)(o_g-e_g)/V_g\}$
that decompose $C$ confirms the reading: on this adequately fitting model they show no systematic
tilt across the ordered risk groups---the larger individual contributions do not align into the
monotone directional pattern of Figure~\ref{fig:readout}---which is why EF and HL concur. The
read-out here correctly reports the absence of a directional departure, in contrast to the
controlled misfit of Section~\ref{subsec:readout_demo} where the same tool exhibits a clear tilt.

\begin{table}[t]
\centering
\caption{Goodness-of-fit $p$-values for the low-birth-weight cohort ($n=189$, $59$ events, event rate
$0.31$), on $G=10$ deciles of fitted risk. EF, Ebrahim--Farrington; HL, Hosmer--Lemeshow; PH,
Pigeon--Heyse. The additive model and the refined model (adding \texttt{age}${:}$\texttt{lwt} and
\texttt{smoke}${:}$\texttt{lwt}) are both judged adequate by all three tests; EF concurs with HL at each
rung. A $p$-value below $0.05$ would signal lack of fit.}
\label{tab:lbwruns}
\begin{tabular}{@{}l l ccc@{}}
\toprule
Run & Model & EF & HL & PH \\
\midrule
1 & Additive                                             & 0.216 & 0.247 & 0.327 \\
2 & \; ${}+\;$\texttt{age:lwt}, \texttt{smoke:lwt}        & 0.580 & 0.631 & 0.723 \\
\bottomrule
\end{tabular}
\end{table}

\subsection{Breadth: four named datasets, one an external validation}
\label{subsec:breadth}

To place the LBW example in context we applied the same three partition tests to a set of named clinical
datasets with continuous covariates, spanning event rates from $0.21$ to $0.33$ and including a genuine
\emph{external validation} (Table~\ref{tab:concordance}). The \texttt{kyphosis} data (from the
\textsf{rpart} package) and the Global Longitudinal Study of Osteoporosis in Women (GLOW)
\citep{hosmer2013applied} are analyzed by fitting and testing
on the same sample; for the Pima Indians diabetes data \citep{venables2002modern} we exploit the built-in
development/validation split, fitting a parsimonious model on the training sample \texttt{Pima.tr}
($n=200$) and applying it unchanged to the disjoint validation sample \texttt{Pima.te} ($n=332$), the
setting in which goodness-of-fit testing is most consequential. The transported model is well calibrated
(calibration slope $\hat\gamma=0.95$, $95\%$ CI $[0.74,1.17]$), and EF, HL and PH concur that it is
adequate on the new sample ($p=0.44$, $0.61$, $0.70$). We do not read this as an endorsement of EF
for external validation in general: as Appendix~\ref{app:slope} shows, EF's directional sensitivity
favors it against an over-shrunk model (calibration slope above one) but tells against it for the
overfitted slope (below one) that is the more common transport failure. Here the slope is close to
one, so the regime is neutral and the tests agree.

The pattern across the table is uniform. On every well-specified model EF and HL return the same
decision---none is rejected---as they must, given that EF differs from HL only by a term that is inert
when the misfit is not directional. In each case the EF $p$-value is the smaller of the two---never by enough to change a verdict, and for the
small kyphosis model ($n=81$, below the smallest size-verified $n=100$) its $p$-values sit far from the
threshold---because the correction $C$ is negative throughout, so EF is consistently the marginally more
sensitive test, but the
difference never approaches the threshold that would separate their conclusions. That $C$ has the same
sign in every case carries no size implication: $C$ has mean essentially zero under the null
(Section~\ref{subsec:type1}), so on these near-adequate models its small negative value reflects only mild
residual structure and sampling variation, not a systematic bias. This is precisely the
profile a practitioner wants from a drop-in refinement: no surprises on data that a trusted test already
handles, and a small reserve of extra sensitivity held for the directional departures the simulations
identify.

\begin{table}[t]
\centering
\caption{Goodness-of-fit $p$-values across four named clinical datasets, on $G=10$ deciles of fitted
risk. EF, Ebrahim--Farrington; HL, Hosmer--Lemeshow; PH, Pigeon--Heyse. $n$ is the sample size (for Pima,
the validation sample) and ``event'' the event rate. The Pima row is an external validation: the model is
developed on \texttt{Pima.tr} and tested, unchanged, on the disjoint \texttt{Pima.te}. No model is
rejected by any test; EF concurs with HL throughout, with a consistently small negative correction $C$
that leaves it marginally the more reactive test.}
\label{tab:concordance}
\begin{tabular}{@{}l r r ccc@{}}
\toprule
Dataset (model) & $n$ & event & EF & HL & PH \\
\midrule
LBW: additive \citep{hosmer2013applied}                    & 189 & 0.31 & 0.216 & 0.247 & 0.327 \\
Kyphosis: \texttt{Age${+}$Number${+}$Start}                & 81  & 0.21 & 0.469 & 0.738 & 0.816 \\
GLOW: \texttt{age${+}$weight${+}$priorfrac${+}\cdots$}     & 500 & 0.25 & 0.346 & 0.362 & 0.455 \\
Pima (external validation) \citep{venables2002modern}      & 332 & 0.33 & 0.438 & 0.609 & 0.702 \\
\bottomrule
\end{tabular}

\vspace{2pt}
{\footnotesize No $p$-value falls below $0.05$. Covariate-space tests (Tsiatis, Xie) target a different
class of departure and are evaluated in the simulation study (Section~\ref{sec:results}); here we report the
probability-grouping family, of which EF is a member. Values computed with the released
\texttt{ebrahim.gof} grouping.}
\end{table}

\subsection{Summary of the illustration}
\label{subsec:realdata_summary}

On real clinical data with continuous covariates, EF reproduces the decisions of Hosmer--Lemeshow: across
the low-birth-weight cohort, two further named datasets, and an external validation on the Pima cohort,
the two tests never disagree, and EF is only ever the slightly more reactive of the pair. We present this
not as evidence that EF is the better test---on these well-specified models there is nothing for its
directional correction to find, and it says so---but as evidence that EF is a \emph{safe} refinement of
HL: it inherits HL's decisions where HL is trustworthy, at no cost in size or reliability, and holds its
extra, directional sensitivity in reserve for the asymmetric-link departures to which, the simulation
study shows, it is the most sensitive of the well-calibrated partition tests
(Section~\ref{sec:results}).

\section{Discussion}
\label{sec:discussion}

\subsection{The contribution, stated as a principle}
\label{subsec:principle}

The Ebrahim--Farrington test is Hosmer--Lemeshow with one directional term subtracted,
$T_{\mathrm{EF}}=\hat C_G-C$, and the whole of its behavior follows from a single principle:
\emph{the sign and size of a single alignment functional $A(\delta)$ decide whether, and in which
direction, the correction changes the test}. Proposition~\ref{prop:align} makes this precise
through $A(\delta)=\sum_g(1-2\bar\pi_g)\delta_g/V_g$. Because the weight
$(1-2\bar\pi_g)$ is odd about $\bar\pi_g=\tfrac12$, it annihilates any symmetric (even) pattern of
group-residual bias, so a symmetric misfit gives $A(\delta)=0$ and leaves $T_{\mathrm{EF}}$ first-order
equivalent to $\hat C_G$; a signed, directional bias pattern gives $A(\delta)\neq0$, and its sign decides the
direction---a gain for an asymmetric link ($A(\delta)<0$) and a penalty for a departure that aligns the
other way ($A(\delta)>0$). The simulations bear this out: the advantage over Hosmer--Lemeshow appears only for
the asymmetric-link family and is monotone in $|A(\delta)|$ (Figure~\ref{fig:alignment}); on the omitted
interaction the two tests are statistically indistinguishable, and on the omitted quadratic---where the
correction's directional alignment runs the wrong way, $A(\delta)>0$---EF is the \emph{less} powerful
(Figure~\ref{fig:agreement}). It never gains outside the directional block. This is not a generic power boost dressed up with a
functional; it is a narrow, \emph{predictable} gain, and $A(\delta)$ is what makes it predictable. The
same functional signs other directional departures too: a calibration-slope miscalibration, for one, is
purely directional, and $A(\delta)$ determines the side on which EF gains or cedes power
(Proposition~\ref{prop:slope}, Appendix~\ref{app:slope}).

\subsection{Attractive features}
\label{subsec:features}

Read as a member of the Hosmer--Lemeshow family, the test has several practical merits.
\emph{(i) It is a drop-in for HL.} It uses the same deciles-of-risk grouping, the same $\chi^2_{G-2}$
reference, and is computed in a single pass over the fitted probabilities; anyone who reports HL can
report $T_{\mathrm{EF}}$ at no additional modeling effort.
\emph{(ii) It is refit-free.} Unlike directed score tests such as Stukel's
\citep{stukel1988generalized}, which augment the linear predictor and re-estimate the model, the
correction is a closed-form function of the existing fit and never requires the auxiliary refit that
such a directed test needs.
\emph{(iii) It has correct size.} Across every null configuration---sample sizes $n\in\{100,500,1000,5000\}$,
group counts $G\in\{5,10,20\}$, and both symmetric and skewed covariate designs---the empirical type~I
error of $T_{\mathrm{EF}}$ is indistinguishable from that of HL, and lies inside the Monte-Carlo band
except in the smallest-sample, many-group cells ($n=100$ with $G\in\{10,20\}$), where \emph{both}
tests are equally, mildly conservative (Section~\ref{subsec:type1}, Figure~\ref{fig:typeI}); e.g.\ at
$n=1000,\ G=10$ the two tests reject at $0.045$ and $0.045$ respectively. The correction buys
sensitivity without a size penalty.
\emph{(iv) It supplies a directional read-out.} The per-group contributions $\{c_g\}$ decompose $C$ along
the risk scale, so a systematic tilt in $\{c_g\}$ localizes the direction of misfit and complements the
omnibus $p$-value with information about its \emph{kind} (Figure~\ref{fig:readout};
Sections~\ref{subsec:readout_demo} and~\ref{sec:realdata}).
\emph{(v) It is available.} The test, together with the full partition family used in our comparison, is
implemented in the R package \texttt{ebrahim.gof}; its \texttt{run.all.gof()} function returns, in a single call, more than 25 goodness-of-fit
test statistics for logistic regression, so a practitioner can screen a fitted model against the whole battery at once.

\subsection{Honest limitations}
\label{subsec:limitations}

We are deliberate about what the test does not do, because its scope is the point.

First, \emph{there is no universal gain, and the correction can even cost}. A genuinely symmetric misfit
gives $A(\delta)=0$, so the correction is inert. But an omitted quadratic term, in a skewed risk range,
leaves a \emph{directional} residual pattern that aligns against the correction ($A(\delta)>0$), and there
$T_{\mathrm{EF}}$ is not merely no help but \emph{less} powerful than HL: by
Proposition~\ref{prop:align}(iii) the positive alignment lowers its non-centrality (the quadratic-misfit
power at $n=1000$ is $0.33$ for EF against $0.43$ for HL, the same $n$ as the cloglog comparison below). The test equals HL almost everywhere;
its one clear win is asymmetric-link misspecification, most sharply the complementary log--log link
($0.50$ vs.\ HL's $0.41$ at $n=1000$, a gain of about $+0.09$; Figure~\ref{fig:agreement}), with a smaller
gain for loglog ($+0.03$). The continuous Aranda--Ordaz sweep, which interpolates from the logit to the
complementary log--log link as $\alpha\!\to\!0$, confirms the same phenomenon: the advantage grows
monotonically with link asymmetry, reaching $+5.8$ percentage points at its most asymmetric point
($\alpha=0.1$; Figures~\ref{fig:sweep} and \ref{fig:landscape}). We keep this win in perspective against
the right benchmark: a \emph{directed} link test such as Stukel's, which targets the asymmetric
alternative by refitting an augmented model, is far more powerful here: in a direct check it rejects the
complementary log--log misfit about $83\%$ of the time at $n=1000$, against EF's $50\%$. EF is not a
substitute for such a targeted test when the alternative is pre-specified; its value is that, as a
refit-free omnibus companion to HL, it recovers part of that directional sensitivity for free, without
the auxiliary refit that a directed test requires (and that can fail in genuinely sparse samples, though
not at the sample size here).

Second, \emph{the advantage is second order and fades with sample size}. The correction $C$ is a linear
term added to a quadratic (chi-squared) statistic, so its relative contribution is $O(n^{-1/2})$: the
finite-sample edge that is visible at $n=1000$ shrinks as $n$ grows, and on the Aranda--Ordaz asymmetric-link
sweep both tests are essentially saturated by $n=5000$ (at the strongest departure, EF $0.995$ vs.\ HL
$0.993$). The niche is a finite-sample phenomenon, not an asymptotic one.

Third, \emph{the test is far less sensitive to covariate-space departures} than tests that group in the
covariate space, as every probability-grouping test must be. Misspecification that leaves the marginal
calibration on the fitted-probability scale largely intact---an omitted interaction, for instance---is
hard to see by grouping on $\hat\pi$: at $n=1000$ a continuous interaction is detected by EF and HL at
only $0.23$ and $0.25$, against $0.43$ and $0.41$ for the covariate-space tests Tsiatis and Xie
(Table~\ref{tab:family}). Detecting such departures well requires grouping where the model is wrong,
which the EF construction does not do.

Fourth, \emph{the test inherits Hosmer--Lemeshow's known properties}. It carries HL's sensitivity to the
choice of group count $G$ and the attendant subjectivity of grouping, and it inherits HL's large-sample
behavior: like any consistent goodness-of-fit test it will eventually reject trivial, practically
irrelevant departures as $n$ grows. This is consistency, not a size defect---HL and EF are correctly sized
under the exact null at every $n$ we examined---but it means $T_{\mathrm{EF}}$ is no
remedy for the large-sample over-rejection sometimes attributed to HL \citep{Nattino2020}. Finally, our power evidence extends
only to $n=5000$; we have not characterized behavior beyond that range.

\subsection{The place of EF among goodness-of-fit tests}
\label{subsec:place}

Goodness-of-fit tests for logistic regression divide into families that trade power against
interpretability, size-robustness, and generality, and no test is best across the misspecification space
(Section~\ref{subsec:tradeoff}). Non-partition tests---the ungrouped standardized-Pearson tests such as
Osius--Rojek, and directed refit-based tests such as Stukel's---can out-power a partition test on a
\emph{smooth}, correctly anticipated alternative, because they use all the data or target the alternative
directly. But they give up what makes a partition test the safer default when the alternative is unknown:
the per-decile localization, the finite-sample size control, and---for directed tests---the omnibus
coverage that keeps power on rough or unanticipated misfit, where a directed test collapses. Within the
partition family, no well-calibrated member is more sensitive to asymmetric-link misfit, and EF keeps
every one of these family-level virtues while remaining so. We therefore position EF not as the most powerful
goodness-of-fit test in any absolute sense, but as a localizable, refit-free, correctly sized omnibus
refinement of the field's standard tool---the right instrument for the exploratory setting in which
goodness of fit is usually assessed, and a complement to, not a replacement for, a targeted test when the
form of the departure is already known.

\subsection{Recommendations for practice}
\label{subsec:recommendations}

We translate these findings into concrete guidance.
\begin{enumerate}
\item \textbf{Use $T_{\mathrm{EF}}$ as a drop-in companion to, not a replacement for, HL.} Because it equals
HL almost everywhere and never costs size, reporting it alongside HL is essentially free and can only add
information; report both $p$-values.
\item \textbf{Read the directional term, not just the $p$-value.} When the EF and HL $p$-values fall on
opposite sides of your chosen threshold---or when $|C|$ is large relative to its null standard deviation
(Appendix~\ref{app:varC})---plot the per-group read-out $\{c_g\}$: a monotone tilt across the risk deciles,
positive at one end and negative at the other as in Figure~\ref{fig:readout}, points to asymmetric-link
misfit and suggests trying an asymmetric link (e.g.\ complementary log--log) rather than adding polynomial
terms.
\item \textbf{Do not rely on $T_{\mathrm{EF}}$ for symmetric or covariate-space misfit.} If the concern is an
omitted nonlinearity, prefer HL or a directed nonlinearity test; if it is an omitted interaction, use a
covariate-space test (Tsiatis, Xie) or a term-specific score test. No probability-grouping test, EF included,
will find these.
\item \textbf{Keep $G=10$ as the default and report $G$.} We recommend the conventional decile grouping, for
which the $\chi^2_{G-2}$ reference is reliable; because the family is group-count sensitive, always state $G$
and, where feasible, confirm that conclusions are stable across $G\in\{5,10,20\}$.
\item \textbf{Interpret with sample size in mind.} In small-to-moderate samples the asymmetric-link advantage
is worth having; in very large samples treat any rejection---by EF or HL---as a signal to assess the
\emph{magnitude} of miscalibration, not merely its presence.
\end{enumerate}

\subsection{Future work}
\label{subsec:future}

The structural reason the omnibus gain is modest is that a signed linear term buried inside a quadratic form
is second order; making the directional information first order requires studentizing $C$ and giving it its
own rejection region, which defines a \emph{directed} companion procedure (a directional Ebrahim--Farrington,
or DEF, test) that we develop in separate work.


\appendix
\section{Proof of Proposition~\ref{prop:align}}
\label{app:proof}

Throughout we work under the sequence of local alternatives of Section~\ref{subsec:align}, in which the
fitted logistic model departs from the truth by an $O(n^{-1/2})$ amount. Write the group residual as
$R_g=o_g-e_g$, with $\mathbb{E}(R_g)=\delta_g$ and, to first order, $\mathrm{Var}(R_g)=V_g=n_g\bar\pi_g(1-\bar\pi_g)$;
under the null all $\delta_g=0$. Let $w_g=1-2\bar\pi_g$ denote the correction weight, so that
$C=\sum_{g=1}^{G}w_g R_g/V_g$ and $\hat C_G=\sum_{g=1}^{G}R_g^2/V_g$.

\subsection*{(i) The correction is linear in the residuals, so $\mathbb{E}(C)=A(\delta)$}

The correction term is a fixed linear combination of the group residuals: conditional on the grouping, the
weights $w_g/V_g$ are non-stochastic to first order (they depend on $\bar\pi_g$, which is $O(1)$ and, under
local alternatives, perturbed only at $O(n^{-1/2})$). Taking expectations term by term,
\begin{equation}
\label{eq:ECequalsA}
\mathbb{E}(C)\;=\;\sum_{g=1}^{G}\frac{w_g}{V_g}\,\mathbb{E}(R_g)
\;=\;\sum_{g=1}^{G}(1-2\bar\pi_g)\,\frac{\delta_g}{V_g}\;=\;A(\delta),
\end{equation}
which is Eq.~\eqref{eq:Adelta}. Since $\hat C_G$ and $T_{\mathrm{EF}}=\hat C_G-C$ differ only by $C$, it
follows immediately that
$\mathbb{E}(T_{\mathrm{EF}})-\mathbb{E}(\hat C_G)=-\mathbb{E}(C)=-A(\delta)$,
establishing part~(i). The mean shift induced by the correction is exactly the alignment functional; no
other feature of the departure enters it.

\subsection*{(ii) The odd weight annihilates symmetric bias, giving $A(\delta)=0$}

Index the groups by their centred mean risk $u_g=\bar\pi_g-\tfrac12\in(-\tfrac12,\tfrac12)$, so that the weight
$w_g=1-2\bar\pi_g=-2u_g$ is an \emph{odd} function of $u_g$. When the fitted-risk distribution is symmetric
about $\tfrac12$, the group levels $\bar\pi_g$ are placed symmetrically about $u=0$, so for each group at $u_g$
there is a partner at $-u_g$ with, to first order, the same variance $V_g$. Decompose the residual-bias pattern
into its parts even
and odd in $u_g$,
\begin{equation*}
\frac{\delta_g}{V_g}\;=\;\Big(\frac{\delta}{V}\Big)^{\mathrm{even}}_g+\Big(\frac{\delta}{V}\Big)^{\mathrm{odd}}_g,
\qquad
\Big(\tfrac{\delta}{V}\Big)^{\mathrm{even}}_{-u}=\Big(\tfrac{\delta}{V}\Big)^{\mathrm{even}}_{u},\quad
\Big(\tfrac{\delta}{V}\Big)^{\mathrm{odd}}_{-u}=-\Big(\tfrac{\delta}{V}\Big)^{\mathrm{odd}}_{u}.
\end{equation*}
Because $w_g=-2u_g$ is odd, the product $w_g(\delta/V)^{\mathrm{even}}_g$ is odd in $u_g$ and cancels in pairs
across the sum, whereas $w_g(\delta/V)^{\mathrm{odd}}_g$ is even and survives. Hence
\begin{equation}
\label{eq:onlyodd}
A(\delta)\;=\;\sum_{g=1}^{G}w_g\,\frac{\delta_g}{V_g}
\;=\;\sum_{g=1}^{G}w_g\,\Big(\frac{\delta}{V}\Big)^{\mathrm{odd}}_g,
\end{equation}
so $A(\delta)$ depends on the \emph{odd} (directional) component of the bias pattern alone. In particular,
if the misfit is symmetric and the fitted risks are placed symmetrically about $\tfrac12$---so that
$\delta_g/V_g$ is an even function of $u_g$---then the odd component is zero and $A(\delta)=0$. By part~(i)
the mean shift then vanishes, and since $T_{\mathrm{EF}}$ and $\hat C_G$ share the identical Pearson core
$\sum_g R_g^2/V_g$ and differ only by the term $C$ whose local mean is now $O(n^{-1})$, the two statistics are
first-order equivalent, $T_{\mathrm{EF}}=\hat C_G+o_p(1)$. When the fitted-risk distribution is instead skewed
(an event rate far from one-half), a departure that is directional in the risk range---the omitted quadratic of
our design, say---retains a substantial odd component, so $A(\delta)$ is large and \emph{positive}; this is the
regime of the quadratic row in Section~\ref{subsec:power}, where by part~(iii) the positive alignment lowers
$T_{\mathrm{EF}}$'s non-centrality and EF is the less powerful. Part~(iii) is the converse reading of Eq.~\eqref{eq:onlyodd}: the non-centrality of
$T_{\mathrm{EF}}$ exceeds that of HL precisely when the surviving odd component makes $A(\delta)<0$
(the directional bias pattern negatively aligned with the correction weight---the signature of an
asymmetric link).

\subsection*{(iii) The $\chi^2_{G-2}$ reference is preserved to first order}

It remains to check that subtracting $C$ does not disturb the null reference. Under the null,
$\delta_g=0$ for all $g$ and, by \eqref{eq:ECequalsA}, $\mathbb{E}(C)=A(0)=0$: the correction is exactly
mean-zero, so it introduces no non-centrality. The standardized residual vector
$Z=(R_1/\sqrt{V_1},\dots,R_G/\sqrt{V_G})^\top$ is, by the Moore--Spruill central limit theorem for
statistics based on model-based grouping \citep{Moore1975,hosmer1980goodness}, asymptotically Gaussian with
covariance $I-P$, where $P$ projects onto the $(p{+}1)$-dimensional space spanned by the estimated score
directions; this is the standard argument that gives HL its
$\hat C_G\rightsquigarrow\chi^2_{G-2}$ reference, with the $G-2$ degrees of freedom following
Hosmer--Lemeshow's simulation-calibrated recommendation. The correction is the linear functional
\begin{equation*}
C\;=\;\sum_{g=1}^{G}\frac{w_g}{\sqrt{V_g}}\,\frac{R_g}{\sqrt{V_g}}\;=\;b^\top Z,
\qquad b_g=\frac{w_g}{\sqrt{V_g}}=\frac{1-2\bar\pi_g}{\sqrt{n_g\bar\pi_g(1-\bar\pi_g)}}.
\end{equation*}

The decisive point is the \emph{order} of the weight vector $b$. With equal-sized groups $n_g=n/G$ and
$\bar\pi_g$ bounded away from $0$ and $1$,
\begin{equation*}
b_g\;=\;\frac{1-2\bar\pi_g}{\sqrt{n_g\bar\pi_g(1-\bar\pi_g)}}\;=\;O\!\Big(\sqrt{\tfrac{G}{n}}\Big),
\end{equation*}
so for fixed $G$ the whole vector $b\to 0$ and $C=b^\top Z=o_p(1)$. Subtracting an $o_p(1)$ term does not
change the limit: by Slutsky's theorem $T_{\mathrm{EF}}=\hat C_G-C$ has the \emph{same} $\chi^2_{G-2}$
limiting law as the Hosmer--Lemeshow statistic. This is the null-side complement of the power story---the
correction fades in $n$, so under the exact null it fades to nothing, while under the $O(n^{-1/2})$ local
alternatives above it retains a mean $A(\delta)$ just large enough to shift the non-centrality. The
remaining finite-sample effect is small and empirically negligible: the standard deviation of $C$ under the
null is $O(n^{-1/2})$---about $0.08$ at $G=10,\ n=1000$, against the reference standard deviation of roughly
$4$ (Appendix~\ref{app:varC}). The Kolmogorov--Smirnov checks of Section~\ref{subsec:type1} confirm the
$\chi^2_{G-2}$ calibration, and the equality of EF's and HL's size, across every sample size and group count
studied. The underlying conditional-moment machinery is that of \citet{mccullagh1985} and
\citet{farrington1996}, which we cite rather than reproduce. \qed

\subsection*{Variance of the correction}
\label{app:varC}

We record the size of the correction explicitly. Writing $C=b^\top Z$ with
$b_g=(1-2\bar\pi_g)/\sqrt{V_g}$ as above, and using the null covariance $\mathrm{Cov}(Z)=I-P$ where $P$ is
the projection onto the fitted score directions,
\begin{equation}
\label{eq:varC}
\mathrm{Var}(C)\;=\;b^\top(I-P)\,b\;=\;\sum_{g=1}^{G}\frac{(1-2\bar\pi_g)^2}{V_g}\;-\;b^\top P\,b .
\end{equation}
Both terms are $O(G^2/n)$: with $n_g=n/G$ the leading sum is
$\sum_g(1-2\bar\pi_g)^2/[(n/G)\bar\pi_g(1-\bar\pi_g)]=O(G^2/n)$, and the projection correction $b^\top P\,b$ is
of the same order, so $\mathrm{Var}(C)\to0$ as $n\to\infty$ for fixed $G$; numerically it is small throughout,
about $0.007$ (standard deviation $\approx0.08$) at $G=10,\ n=1000$ and decaying like $1/n$. This vanishing is
why $C=o_p(1)$ under the null (part~(iii)) and why the advantage under local alternatives is second order and
recedes with $n$. The leading term $\sum_g(1-2\bar\pi_g)^2/V_g$
is largest when the risk deciles are spread toward the tails of $(0,1)$, where $|1-2\bar\pi_g|$ is close to one;
this is the same configuration---strongly skewed predicted-risk distributions---in which an asymmetric link
produces the odd bias pattern that the correction is built to detect.

\section{A calibration-slope departure is purely directional}
\label{app:slope}

As a second illustration of the alignment mechanism---and one that shows it can cut either way---consider a
miscalibrated \emph{calibration slope}, the standard summary of fit in external validation. We show that such
a departure produces a residual-bias pattern that is \emph{purely odd}, so that---unlike an omitted quadratic
term---none of its signal is lost to the correction, and the sign of $A(\delta)$, and hence which of EF and HL
is the more powerful, is fixed entirely by the direction of the miscalibration.

Let the true risk be $p(\eta)=\{1+e^{-\eta}\}^{-1}$ at true linear predictor $\eta$, and let the fixed external
model report predictions $\hat\pi$ whose own linear predictor is $\hat\eta=\eta/\gamma$, so that the
recalibration model is $\operatorname{logit}p=\gamma\hat\eta$ and $\gamma$ is exactly the calibration slope
($\gamma=1$ is correct calibration; $\gamma<1$ an \emph{overfitted} model with too-extreme predictions;
$\gamma>1$ an over-shrunk model with too-moderate predictions). Thus $\hat\pi(\eta)=\{1+e^{-\eta/\gamma}\}^{-1}$.
Groups are formed on the ordered predictions $\hat\pi$, equivalently on $\eta$. The group residual bias is, to
first order,
\begin{equation*}
\delta_g\;=\;\sum_{i\in g}\bigl\{p(\eta_i)-\hat\pi(\eta_i)\bigr\}\;\approx\;n_g\,d(\bar\eta_g),
\qquad
d(\eta)\;:=\;\frac{1}{1+e^{-\eta}}-\frac{1}{1+e^{-\eta/\gamma}} .
\end{equation*}

\begin{proposition}
\label{prop:slope}
The discrepancy $d(\eta)$ is an odd function of $\eta$ with $d(0)=0$ and, for $\eta>0$,
$\operatorname{sign} d(\eta)=\operatorname{sign}(\gamma-1)$. Consequently, with predicted risk
$\bar\pi_g=\hat\pi(\bar\eta_g)$ increasing in $\bar\eta_g$, the residual bias $\delta_g$ is single-signed on
each side of $\bar\pi_g=\tfrac12$---negative above and positive below when $\gamma<1$, and the reverse when
$\gamma>1$---so its odd part about $\bar\pi=\tfrac12$ is the whole of it. The alignment functional therefore satisfies
$\operatorname{sign} A(\delta)=\operatorname{sign}(1-\gamma)$: $A(\delta)>0$ for an overfitted slope
$\gamma<1$ and $A(\delta)<0$ for an over-shrunk slope $\gamma>1$. By Proposition~\ref{prop:align}, EF is more
powerful than Hosmer--Lemeshow exactly when $\gamma>1$, and \emph{less} powerful when $\gamma<1$.
\end{proposition}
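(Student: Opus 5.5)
The plan is to prove the three assertions in sequence: the symmetry and sign of $d$, then the sign pattern of $\delta_g$ across the risk groups, then the sign of $A(\delta)$. The comparison with Hosmer--Lemeshow is then read off from Proposition~\ref{prop:align}(iii).

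\emph{The function $d$.} Write $\sigma(x)=\{1+e^{-x}\}^{-1}$, so that $d(\eta)=\sigma(\eta)-\sigma(\eta/\gamma)$.
\begin{itemize}
\item Oddness follows from the identity $\sigma(-x)=1-\sigma(x)$: $d(-\eta)=\{1-\sigma(\eta)\}-\{1-\sigma(\eta/\gamma)\}=-d(\eta)$, and in particular $d(0)=0$.
\item For the sign, use that $\sigma$ is strictly increasing. For $\eta>0$ we have $\eta>\eta/\gamma$ exactly when $\gamma>1$, so $\operatorname{sign} d(\eta)=\operatorname{sign}(\gamma-1)$.
\item By oddness the sign is reversed for $\eta<0$.
\end{itemize}

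\emph{Sign pattern of $\delta_g$.} Since $\hat\pi=\sigma(\eta/\gamma)$ is increasing in $\eta$ for $\gamma>0$, the relation $\hat\pi>\tfrac12$ is equivalent to $\eta>0$. Under the first-order approximation $\delta_g\approx n_g d(\bar\eta_g)$ used to define the departure, the previous step gives the following.
\begin{itemize}
\item Groups with $\bar\pi_g>\tfrac12$ have $\operatorname{sign}\delta_g=\operatorname{sign}(\gamma-1)$.
\item Groups below $\tfrac12$ have the opposite sign.
\item This is exactly the stated pattern: negative above and positive below when $\gamma<1$, and the reverse when $\gamma>1$.
\end{itemize}
The ``purely odd'' remark then follows. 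Because $d$ is odd in $\eta$, and $\eta\mapsto\hat\pi$ maps $\eta\mapsto-\eta$ to $\hat\pi\mapsto1-\hat\pi$, the pattern $d$, viewed as a function of $u=\hat\pi-\tfrac12$, is odd in $u$. Its even part is therefore zero, so in the decomposition~\eqref{eq:onlyodd} nothing is annihilated by the weight.

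\emph{Sign of $A(\delta)$.} This step does not need a symmetric design. Look at each summand $w_g\delta_g/V_g$ with $w_g=1-2\bar\pi_g$ and $V_g>0$.
\begin{itemize}
\item Above $\tfrac12$: $w_g<0$ and $\operatorname{sign}\delta_g=\operatorname{sign}(\gamma-1)$, so the product has sign $\operatorname{sign}(1-\gamma)$.
\item Below $\tfrac12$: both factors flip sign, so the product again has sign $\operatorname{sign}(1-\gamma)$.
\item At $\bar\pi_g=\tfrac12$ the term is zero.
\end{itemize}
Every summand therefore has sign $\operatorname{sign}(1-\gamma)$ or vanishes. Hence $\operatorname{sign}A(\delta)=\operatorname{sign}(1-\gamma)$, strictly so provided $\gamma\neq1$ and at least one group has $\bar\eta_g\neq0$, which holds trivially for $G\ge2$. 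Applying Proposition~\ref{prop:align}(iii) then gives that EF is the more powerful test exactly when $A(\delta)<0$, that is, when $\gamma>1$.

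\emph{Main obstacle.} I expect the hard part to be the step from the exact residual bias $\delta_g=\sum_{i\in g}d(\eta_i)$ to the sign statements. There are two issues.
\begin{itemize}
\item $\bar\pi_g$ is an average of $\hat\pi_i$, not $\hat\pi(\bar\eta_g)$.
\item A middle group may straddle $\eta=0$, so that $\delta_g$ mixes terms of both signs.
\end{itemize}
For groups lying entirely on one side of $0$, the exact sum is single-signed, and $\bar\pi_g$ lies on the same side of $\tfrac12$, so the argument goes through without approximation. The only problematic case is the at most one straddling group. I would first try to bound its contribution: for it, $|w_g|$ is small, at most the within-group spread of $\hat\pi$, so its term is dominated by the strictly signed neighbouring terms. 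Failing that, I would fall back on the same first-order convention $\delta_g\approx n_g d(\bar\eta_g)$ with $\bar\pi_g=\hat\pi(\bar\eta_g)$ that the statement itself adopts, under which every term is single-signed and no exceptional group arises.
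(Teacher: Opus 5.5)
Your proof is correct and follows the paper's own argument essentially step for step: oddness of $d$ from $\sigma(-x)=1-\sigma(x)$, its sign on $\eta>0$ from monotonicity of the logistic map, a term-by-term sign check of $w_g\delta_g/V_g$ on each side of $\bar\pi_g=\tfrac12$, and then an appeal to Proposition~\ref{prop:align}(iii). The paper works throughout under the first-order convention $\delta_g\approx n_g d(\bar\eta_g)$ with $\bar\pi_g=\hat\pi(\bar\eta_g)$, which you name as your fallback, so your treatment of a group straddling $\eta=0$ is an extra refinement the paper does not attempt rather than a step its proof needs.
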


\begin{proof}
Oddness is direct: $d(-\eta)=p(-\eta)-\hat\pi(-\eta)=\{1-p(\eta)\}-\{1-\hat\pi(\eta)\}=-d(\eta)$, using
$p(-\eta)=1-p(\eta)$ and the same identity for $\hat\pi$. For $\eta>0$, the logistic map is strictly
increasing, so $\gamma<1$ gives $\eta/\gamma>\eta$, hence $\hat\pi(\eta)>p(\eta)$ and $d(\eta)<0$; $\gamma>1$
reverses the inequality; $\gamma=1$ gives $d\equiv0$. Thus $\operatorname{sign} d(\eta)=\operatorname{sign}(\gamma-1)$
on $\eta>0$. Because $\bar\pi_g$ is a strictly increasing (and, about $\eta=0$, odd-symmetric) transform of
$\bar\eta_g$, the pattern $\delta_g\propto d(\bar\eta_g)$ is single-signed on each side of $\bar\pi=\tfrac12$
and odd about it: its even component in the decomposition \eqref{eq:onlyodd} is zero. Writing $u_g=\bar\pi_g-\tfrac12$
and $w_g=-2u_g$, for $\gamma<1$ we have $d<0$ where $u_g>0$ ($\eta>0$) and $d>0$ where $u_g<0$, so every product
$w_g\,\delta_g/V_g=-2u_g\,\delta_g/V_g$ is positive and $A(\delta)=\sum_g w_g\delta_g/V_g>0$; the signs reverse
for $\gamma>1$. Hence $\operatorname{sign} A(\delta)=\operatorname{sign}(1-\gamma)$, and the power ordering
follows from Proposition~\ref{prop:align}.
\end{proof}

This is the sharpest instance of the mechanism, and an honest one about its limits. A symmetric departure
supplies the correction with no odd signal at all ($A(\delta)=0$); a calibration-slope departure supplies
\emph{only} odd signal, so the correction operates at full efficiency and its effect is governed by the single
scalar $\operatorname{sign}(1-\gamma)$. The direction is decisive. EF is the more powerful test against an
over-shrunk model (calibration slope $\gamma>1$, fitted risks too moderate, as aggressive penalization can
produce), but it \emph{cedes} power against the overfitted slope $\gamma<1$---the more common finding when a
published risk model is transported to a new population \citep{vancalster2019,steyerberg2019}. The correction
is thus not a general-purpose improvement for external validation; it is a directional sensitivity whose
favorable and unfavorable regimes are both named in advance by $A(\delta)$.

\section*{Data and Code Availability}
The data that support the findings of this study are openly available in established public R packages: the
low-birth-weight (\texttt{birthwt}) and Pima Indians (\texttt{Pima.tr}, \texttt{Pima.te}) data in
\textsf{MASS}, the \texttt{kyphosis} data in \textsf{rpart}, and the GLOW data in the \textsf{aplore3}
package. The Ebrahim--Farrington test and the full partition family used here are implemented in the R
package \texttt{ebrahim.gof}, available on CRAN. The simulation code and result files that reproduce every
table and figure are permanently archived at Zenodo
(\href{https://doi.org/10.5281/zenodo.21184547}{doi:10.5281/zenodo.21184547}) and available on GitHub
(\url{https://github.com/ebrahimkhaled/ebrahim-gof-paper}).

\section*{Competing Interests}
The authors declare no competing interests.

\section*{Use of Artificial Intelligence Tools}
During the preparation of this manuscript, the authors used a large-language-model assistant (Claude,
Anthropic) to help edit the wording and \LaTeX{} of the text. This tool was not used for the study design,
the statistical methodology, the simulation code, or the analysis and interpretation of the results, all of
which are the authors' own. The authors reviewed and edited all AI-assisted text and take full responsibility
for the content of the publication.

\bibliographystyle{plainnat}
\bibliography{refs}

\end{document}